\newtheorem{theorem}{Theorem}
\newtheorem{corollary}{Corollary}
\newtheorem{claim}{Claim}
\crefname{claim}{Claim}{Claims}
\crefname{theorem}{Theorem}{Theorems}
\title{Reforming an Envy-Free Matching\thanks{%
A preliminary version will appear in Proceedings of the 36th AAAI Conference on Artificial Intelligence (AAAI 2022).  
This work was supported by 
JSPS KAKENHI Grant Numbers JP18H04091, JP19K11814, JP20H05793, JP20H05795, JP20K11670, JP20K23323, JP18H05291, JP19H05485, JP21H03397.}}
\author{Takehiro Ito%
\thanks{Tohoku University, {\tt takehiro@tohoku.ac.jp}}
\and 
Yuni Iwamasa%
\thanks{Kyoto University, {\tt iwamasa@i.kyoto-u.ac.jp}}
\and 
Naonori Kakimura%
\thanks{Keio University, {\tt kakimura@math.keio.ac.jp}}
\and 
Naoyuki Kamiyama%
\thanks{Kyushu University, {\tt kamiyama@imi.kyushu-u.ac.jp}}
\and 
Yusuke Kobayashi%
\thanks{Kyoto University, {\tt yusuke@kurims.kyoto-u.ac.jp}}
\and 
Yuta Nozaki%
\thanks{Hiroshima University, {\tt nozakiy@hiroshima-u.ac.jp}}
\and 
Yoshio Okamoto%
\thanks{The University of Electro-Communications, {\tt okamotoy@uec.ac.jp}}
\and
Kenta Ozeki%
\thanks{Yokohama National University, {\tt ozeki-kenta-xr@ynu.ac.jp
}}
}
\date{}
\begin{document}

\maketitle

\begin{abstract}
We consider the problem of reforming an envy-free matching when
each agent is assigned a single item. Given an envy-free matching,
we consider an operation to exchange the item of an agent with 
an unassigned item preferred by the agent that results in another 
envy-free matching. 
We repeat this operation as long as we can. We prove that
the resulting envy-free matching is uniquely determined 
up to the choice of an initial envy-free matching, and can be 
found in polynomial time. We call the resulting 
matching a reformist envy-free matching, and then we 
study a shortest sequence to obtain the reformist envy-free matching 
from an initial envy-free matching. We prove 
that a shortest sequence is computationally hard to 
obtain even when each agent accepts at most four items and each item is 
accepted by at most three agents. On the
other hand, we give polynomial-time algorithms when 
each agent accepts at most three items or 
each item is accepted by at most two agents. 
Inapproximability and fixed-parameter (in)tractability
are also discussed.
\end{abstract} 

\section{Introduction}

\emph{Matching under preferences} constitutes an important
and 
well investigated subarea of 
economics and game theory,
and its computational aspects are intensively studied in 
algorithmic game theory and computational social choice
(see, e.g., \cite{KlausMR16,M13}).
In a lot of situations, we are interested in allocating 
indivisible
items, namely, items that cannot be subdivided into several 
parts.
Examples include job allocation, college admission, school 
choice,
kidney exchange, and junior doctor allocation to hospital 
posts. 
Especially, 
this paper is concerned with the situation where each agent 
is assigned a single item.
This situation is often 
called the \emph{house allocation problem}.
A set of agents faces a set of items, 
and each agent has a preference over her acceptable items 
(i.e., her preference list can be incomplete). 
In this situation, 
there may be many possible matchings. 
However, some of those matchings suffer from 
``instability.''

Stability is often studied in terms of \emph{envy} of 
agents in the house allocation problem. 
Given a matching, an agent $i$ has a (justified) 
envy for another agent $j$ if the agent $i$ prefers 
the item assigned to $j$ to the item assigned to $i$. 
If there is no agent with envy, 
the matching is said to be envy-free.
Even with envy-freeness, there may be many 
possible matchings, and we want to look 
for a good envy-free matching.
This motivates the following simple procedure that can be
implemented in a decentralized way. 
Agents start with any envy-free matching. 
There are many unassigned items on the table. 
Then an agent $i$ can exchange the item $x$ 
assigned to her with an item $y$ on the 
table if $i$ prefers $y$ to $x$ and 
the exchange does not break the envy-freeness. 
This ``reforming'' process can continue until 
no agent has an incentive for exchange. 
Then every agent will be assigned an item 
that is at least as good as the item that 
was initially assigned, and the 
resulting matching is still envy-free.

Our problem arises in the following situation. 
First, items are assigned to agents by an envy-free matching.
The matching is given \emph{a priori}, and 
agents are satisfied by the items assigned to them.
Then the agents face the arrival of extra items.
This may happen, for example, when some new items are brought into the market, or when some of the agents leave the
market and release their items.
Since the new items could 
improve agents' utilities, the agents might not be satisfied with
the items currently assigned to them any longer.
Hence, we want to reassign items by incorporating the existence of new items. 
One way to redistribute items is to compute a 
new envy-free matching 
from scratch. However, this requires 
the agents first to release their items,
which will results in the decrease of their utilities.
Our proposal here is to exchange items one by one so that
the intermediate matchings are all envy-free and
no agent decreases her utility at any moment during the procedure.

In this paper, we call a matching obtained by the process above a 
\emph{reformist envy-free matching}.
A reformist envy-free matching can 
depend on the choice of an initial envy-free matching
and the
sequence of exchanges.
Our first result states that the exchange sequence 
does not affect the resulting reformist envy-free matching.
Namely, a reformist envy-free matching 
uniquely exists up to the
choice of an initial envy-free matching (\cref{thm:unique}).

The definition of a reformist envy-free matching 
was motivated by a decentralized algorithm.
However, the number of steps in this process is not discussed yet.
With a decentralized algorithm, we may end up with an extremely 
long sequence
of envy-free matching until we obtain a reformist envy-free matching.
On the other hand, if there is coordination among the agents, 
they may quickly
obtain a reformist envy-free matching.
Coordination is modeled as a centralized 
algorithm in which a central authority
declares who should exchange an item next, and agents obey the declarations
of the central authority.
Since a reformist envy-free matching is unique (\cref{thm:unique}), 
there is no reason for agents to deviate from the orders of the central authority.

To formalize the discussion, we consider the following type of algorithms.
Until a reformist envy-free matching is obtained,
an agent is nominated at each step.
Let $i$ be the nominated agent. Then $i$ exchanges the currently assigned item with an unassigned item on the table that is 
most preferred by $i$ such that the matching 
after the exchange is still envy-free.
The choice of nominated agents can change the number of steps.
In the decentralized setting the choice will be done arbitrarily while
in the centralized setting the choice is supposed to be done cleverly
to minimize the number of steps.
Thus, we examine the minimum number of steps to obtain a reformist envy-free matching 
with respect to a given initial envy-free matching.

In what follows, we call a sequence of exchanges to obtain the 
reformist envy-free matching a {\em reformist sequence}, and 
we call the problem of finding a shortest reformist sequence the 
{\em shortest reformist sequence problem}. 
We define the decision version of 
the shortest reformist 
sequence problem as the problem 
where we are given an envy-free matching
$\mu$ and a positive integer $\ell$, and 
we determine whether there is a reformist sequence of length at most 
$\ell$ with respect to the initial envy-free matching $\mu$.
To justify the study of the shortest reformist sequence problem, 
we first show that coordination sometimes makes sense by giving an
example 
in which the maximum number of steps can be arbitrarily
larger than the
minimum number of steps (\cref{thm:exponential-example}).
Then, we prove that the decision version of the shortest reformist
sequence problem is NP-complete even if each agent accepts at most
four items (i.e., the preference list of each agent contains at most
four items) and each item appears in the preference lists of at most
three agents (\cref{thm:NPc}).
On the other hand, the shortest reformist sequence problem can be
solved in polynomial time if each agent accepts at most three items
(\cref{thm:atmost3items}) \emph{or} each item appears in the
preference lists of at most two agents (\cref{thm:<=2}).

With the NP-completeness result, we consider two established
approaches to cope with NP-completeness, namely approximation and
fixed-parameter tractability.
For approximation, we indeed prove that the shortest reformist
sequence problem is hard to 
approximate within the factor of $c \ln n$
for some constant $c$, where $n$ is the number of agents
(\cref{thm:inapx}).
For fixed-parameter tractability, we have several choices of
parameters.
When the length $\ell$ of a reformist sequence is chosen as a
parameter, (the decision version of) the shortest reformist sequence
problem is fixed-parameter tractable (\cref{thm:fpt2}).
On the other hand, When $\ell - n$ is chosen as a parameter, the
problem is W[1]-hard (\cref{thm:w1hard}), where $n$ is the number of
agents.
The choice of the parameter comes from the property that the length of
a reformist sequence is at least $n$ after preprocessing and thus the
parameter is considered the number of redundant steps in the reformist
sequence.
On the other hand, when the number of ``intermediate'' items is chosen
as a parameter, the problem is fixed-parameter tractable
(\cref{thm:fpt}). 
Here, ``intermediate'' items are items that are not assigned in the
initial envy-free matching 
or in the reformist envy-free matching.

\paragraph{Related Work}

The concept of envy-freeness is often used in the literature of 
social choice theory.
For example, Gan, Suksompong, and Voudouris~\cite{GSV19} 
considered the problem of checking the existence of 
an envy-free item matching in the situation where 
any agent accepts all the items and the 
preferences may contain ties. 
They proved that 
we can determine whether there is an 
envy-free item matching 
in polynomial time.
Beynier et al.~\cite{BCGLMW18} considered 
envy-freeness on an envy
relationship network. 
Envy-freeness is also studied
in the literature on fair division of divisible goods such as cake cutting
(e.g.~\cite{Procaccia16,AzizM20,GoldbergHS20}),
on fair division of indivisible goods with numerical valuations (e.g.~\cite{BouveretCM16,ChaudhuryGM20}), and
in two-sided markets
such as the hospitals/residents problem
(e.g.~\cite{WuR18,Yokoi20,KrishnaaLNN20}).

Problems of improving a given item allocation via some operations 
have been considered in the study of item allocations. 
Gourv\`{e}s, Lesca, and Wilczynski~\cite{GLW17}
considered the problem of determining whether 
a target item allocation can be reached via rational swaps 
on a social network.
Furthermore, they considered that the problem of 
determining whether some specified agent can get a target 
item via rational swaps
(see also \cite{BW19,HX20}). 

Our problems are closely related to 
the study of {\em combinatorial reconfiguration}. 
In combinatorial reconfiguration, we consider 
problems where we are given an initial configuration and 
a target configuration of some combinatorial objects, and 
the goal is to check the reachability between these two 
configurations via some specified operations. 
The study of algorithmic aspects of combinatorial reconfiguration 
was initiated in \cite{IDHPSUU11}.
See, e.g., \cite{N18} for a survey of combinatorial 
reconfiguration. 

\section{Preliminaries}

Throughout this paper, a finite set of $n$ 
\emph{agents} is denoted by $N$,
and a finite set of $m$ \emph{items} is denoted by $M$.
Each agent $i \in N$ is associated with a subset $M_i \subseteq M$ 
and a strict total order $\succ_i$ on $M_i$:
$M_i$ represents the set of acceptable items for $i$, and
$\succ_i$ represents the preference of $i$ over $M_i$.
For each agent $i \in N$, we define $m_i := |M_i|$. 
For each agent $i \in N$, if $M_i = \{x_1, x_2, \dots, x_{m_i}\}$ and 
$x_1 \succ_i x_2 \succ_i \dots \succ_i x_{m_i}$, then 
we describe $\succ_i$ by 
$\succ_i \colon x_1, x_2, \dots, x_{m_i}$.
For each agent $i \in N$ and each pair $x,y \in M$ of items, 
we write $x \succeq_i y$ if $x\succ_i y$ or $x=y$.
Note that $\succ_i$ satisfies transitivity, i.e., if $x \succ_i y$ and
$y \succ_i z$, then $x \succ_i z$.

An injective mapping $\mu\colon N \to M$ is called a \emph{matching}
if $\mu(i) \in M_i$ for every agent $i \in N$.
For each matching $\mu$,
an item $x \in M$ is \emph{assigned} if there exists an 
agent $i \in N$
such that $\mu(i)=x$; otherwise $x$ is \emph{unassigned}.
A matching $\mu$ is \emph{envy-free} if there exists no 
pair $i,j \in N$ of distinct 
agents such that
$\mu(j) \succ_i \mu(i)$.
For each matching $\mu$, 
we denote the set of unassigned items for $\mu$ by $\overline{M}_\mu$.

Let $\mu, \sigma$ be envy-free matchings.
We write 
$\mu \leadsto \sigma$ if there exists an agent 
$i \in N$ with the following two conditions:
(i) $\sigma (i) \succ_i \mu(i)$;
(ii) $\mu(j) = \sigma(j)$ for every agent $j \in N \setminus \{i\}$.
Intuitively, if items are assigned to the agents 
according to $\mu$ and 
$\mu \leadsto \sigma$, then $\sigma(i) \in \overline{M}_\mu$ and 
$i$ has an incentive to exchange her item $\mu(i)$ with $\sigma(i)$ 
and the resulting matching is still envy-free.
This way, the operation ``$\leadsto$'' unilaterally 
improves the current 
envy-free matching $\mu$ to a new envy-free matching $\sigma$.

Let $\mu, \sigma$ be envy-free matchings.
If there exist envy-free matchings 
$\mu_0, \mu_1, \dots, \mu_{\ell}$ such that
(1) $\mu_0 = \mu$, $\mu_{\ell} = \sigma$,
(2) $\mu_t \leadsto \mu_{t+1}$ for every integer
$t \in \{0,1,\dots,{\ell}-1\}$, and
(3) there exists no envy-free matching $\mu^{\prime}$ 
such that $\mu_{\ell} \leadsto \mu^{\prime}$,
then $\sigma$ is called a \emph{reformist envy-free matching}
with respect to $\mu$.
Intuitively, a reformist envy-free matching with respect 
to $\mu$ is an envy-free
matching that is obtained from $\mu$ as an outcome of 
the iterative improvement.

To illustrate envy-free matchings, 
we introduce a graph representation.
Given a set $N$ of agents, a set $M$ of items, $M_i$ and 
$\succ_i$ for all agents
$i \in N$, we create the following directed graph.
The vertex set is $M$, the set of items.
For each agent $i \in N$ with $M_i = \{x_1,x_2,\dots,x_k\}$ and 
$\succ_i \colon x_1, x_2, \dots, x_k$, we place $k-1$ arcs 
$(x_2,x_1), (x_3,x_2), \dots, (x_{k}, x_{k-1})$:
those arcs are labeled by
$i$.
There can be parallel arcs from $x$ to $y$ with different labels,
or they can be identified with a single arc from $x$ to $y$ 
with multiple labels.

An example is given in \figurename~\ref{fig:example1-1}.
There are four agents $1, 2, 3, 4$ and seven items $a, b, c, d, e, f, g$. The preferences are given as follows:
\begin{align*}
    \succ_1 \colon& a, b, c, d, e, f, g;\\
    \succ_2 \colon& f, d, a, g, e;\\
    \succ_3 \colon& b, g, a, c;\\
    \succ_4 \colon& d, c, g, e, f.
\end{align*}
The colors are assigned for agents: 
black for agent $1$, blue for agent $2$,
red for agent $3$, and violet for agent $4$.

\begin{figure}[ht]
\centering
\includegraphics[page=1, width = 8cm]{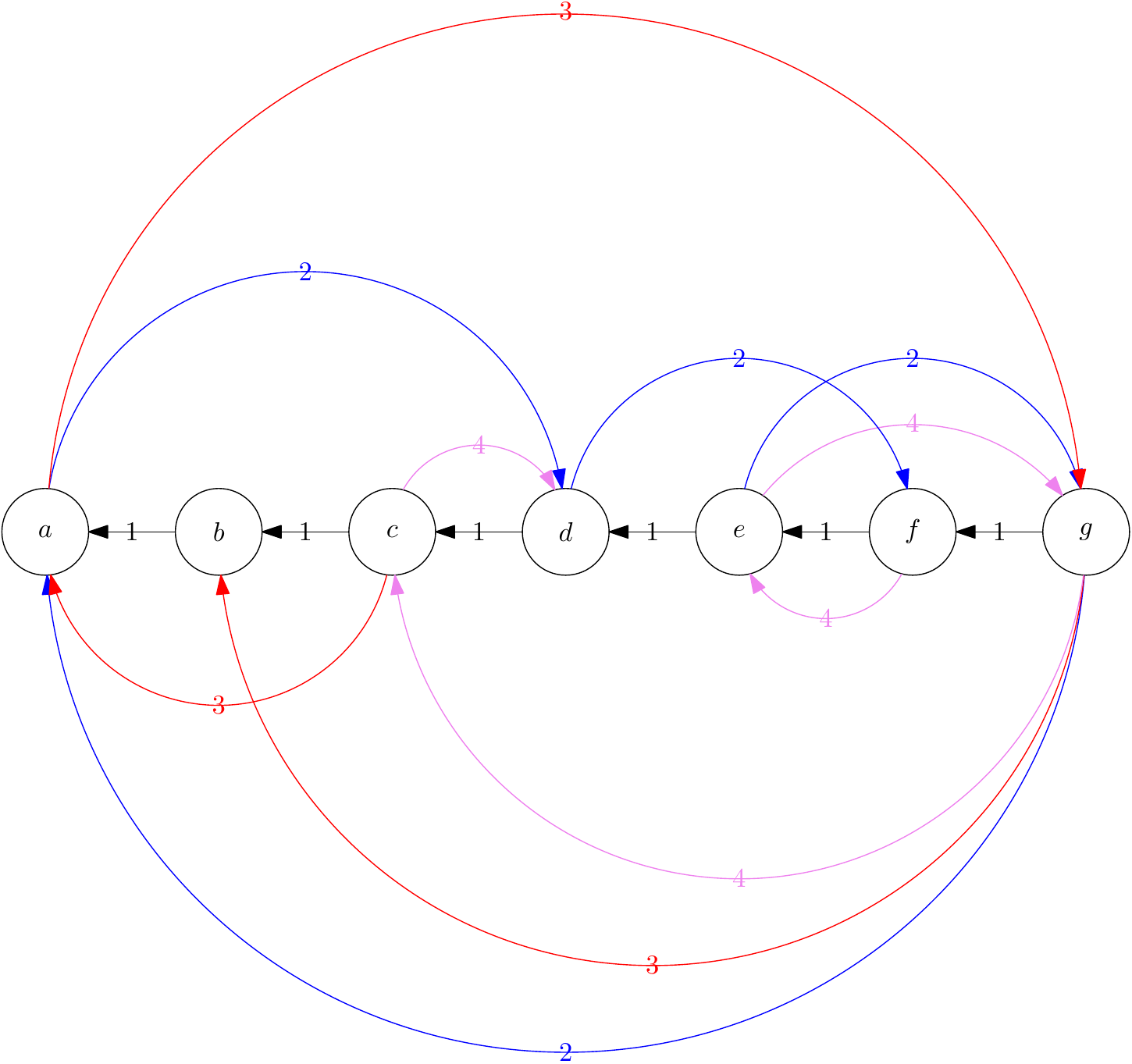}
\caption{A graph representation of an instance.}
\label{fig:example1-1}
\end{figure}

A matching $\mu$ is identified with a labeled token placement.
A token for each agent $i$ is placed on the vertex $\mu(i)$:
the token is labeled by $i$, and for convenience we 
denote the token by $i$.
Since $\mu$ is a matching, no vertex holds two or more tokens.
If a matching $\mu$ is envy-free, then there exist no pair 
of tokens $i,j$ such
that $j$ is placed on a vertex that can be reached from 
the vertex holding $i$ 
along arcs labeled by $i$; the converse also holds.
The operation $\mu \leadsto \sigma$ corresponds to 
moving the token at $\mu(i)$ to $\sigma(i)$.
Labels are often identified with colors in our figures.

In \figurename~\ref{fig:example1-2},
labeled tokens are placed on vertices.
The labels of tokens are shown by colors.
The token $1$ (black) is placed at vertex $b$,
the token $2$ (blue) is placed at vertex $d$,
the token $3$ (red) is placed at vertex $g$,
and the token $4$ (violet) is placed at vertex $e$.
In this example, agent $4$ has an envy for
agent $3$ since the token $3$ is placed on the vertex $g$
that can be reached from the vertex $e$ holding $4$
along arcs labeled by $4$ (i.e., violet arcs).
Similarly, agent $3$ has an envy for agent $1$.

\begin{figure}[ht]
\centering
\includegraphics[page=2, width = 8cm]{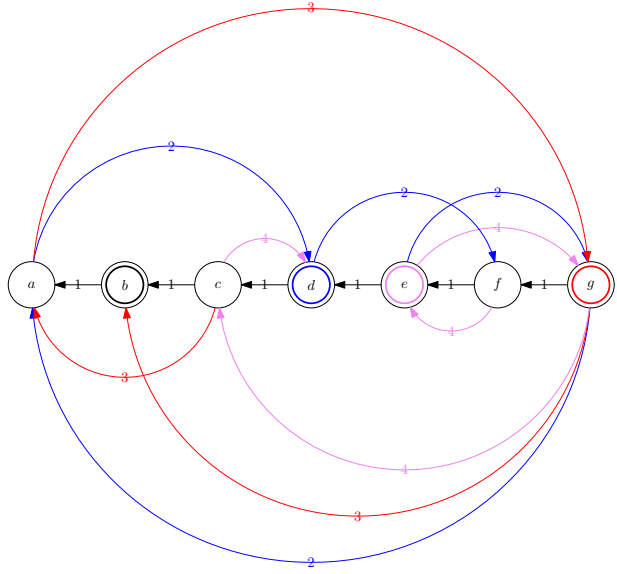}
\caption{A graph representation of a matching with envy.}
\label{fig:example1-2}
\end{figure}

We conclude this section with a small example.
\paragraph{Example}
Consider $2$ agents $N=\{1,2\}$ and $5$ items $M = \{x, y, p, q, r\}$ with preferences 
\begin{align*}
\succ_1 : p, r, q, x \quad \text{and}\quad \succ_2 : q, p, y.
\end{align*}
See \figurename~\ref{fig:smallexample1}.
Let $\mu$ be a matching satisfying $\mu(1)=x$ and $\mu (2) = y$.
Then it is confirmed to be envy-free.
However, in the matching $\mu$, agent $1$ has an incentive to exchange her current item $x$ with $r$, and such exchange does not arouse envy in agent $2$.
Thus we can improve $\mu$ to $\mu_1$, where $(\mu_1 (1), \mu_1(2)) = (r, y)$, which we denote by $\mu \leadsto \mu_1$. 
Similarly, we have $\mu_1 \leadsto \mu_2\leadsto \mu_3$, where $(\mu_2 (1), \mu_2(2)) = (r, q)$ and $(\mu_3 (1), \mu_3(2)) = (p, q)$.
Since $p$ and $q$ are the most preferred items for the agents, $\mu_3$ is the reformist envy-free matching.

\begin{figure}[ht]
\centering
\includegraphics[width=\textwidth]{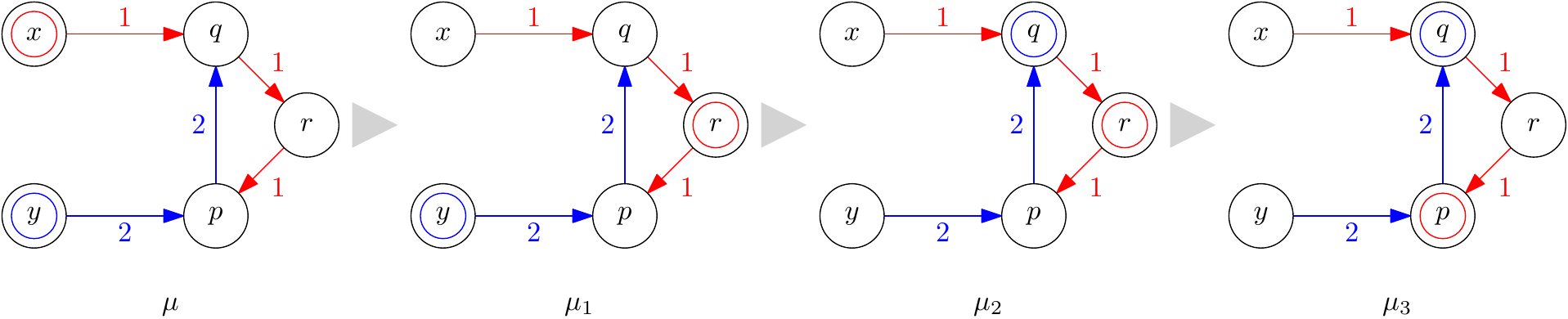}
\caption{A small example.}
\label{fig:smallexample1}
\end{figure}

\section{Uniqueness}\label{sec:uniqueness}

We first observe that a reformist envy-free matching with respect to an envy-free matching can be obtained in polynomial time.
In fact, since one exchange strictly improves the current matching, the number of exchanges to obtain a reformist envy-free matching is at most $|M|\cdot |N|$.

We prove that the obtained reformist envy-free matching is unique up to the choice of an initial envy-free matching.

\begin{theorem}
\label{thm:unique}
Let $\mu$ be an envy-free matching.
Then a reformist envy-free matching with respect 
to $\mu$ uniquely exists.
\end{theorem}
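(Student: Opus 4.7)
The plan is to use Newman's lemma: since $\leadsto$ is a terminating binary relation on envy-free matchings, it will suffice to show \emph{local confluence}, namely, that whenever $\mu \leadsto \mu_1$ and $\mu \leadsto \mu_2$, there is an envy-free matching $\nu$ with $\mu_1 \leadsto^\ast \nu$ and $\mu_2 \leadsto^\ast \nu$. Termination is clear: writing $r_i(x)$ for the position of $x \in M_i$ in $\succ_i$, every $\leadsto$-step strictly decreases the non-negative integer $\sum_{i \in N} r_i(\mu(i))$, so every $\leadsto$-sequence is finite. In particular, a reformist envy-free matching exists, and local confluence will supply uniqueness.

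For local confluence, suppose $\mu \leadsto \mu_1$ via agent $i$ moving to item $x \in \overline{M}_\mu$ and $\mu \leadsto \mu_2$ via agent $j$ moving to item $y \in \overline{M}_\mu$, with $\mu_1 \neq \mu_2$. If $i = j$, then $x \neq y$ and WLOG $x \succ_i y$; since $\mu_2$ differs from $\mu$ only in the slot of $i$, item $x$ remains unassigned in $\mu_2$, and replacing $y$ by $x$ in $\mu_2$ yields the envy-free matching $\mu_1$. Hence $\mu_2 \leadsto \mu_1$, so $\nu := \mu_1$ works. If $i \neq j$ and $x = y$, then envy-freeness of $\mu_1$ forces $x = \mu_1(i) \not\succ_j \mu_1(j) = \mu(j)$, contradicting $x = y \succ_j \mu(j)$ implied by $\mu \leadsto \mu_2$; so this sub-case is vacuous. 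Finally, if $i \neq j$ and $x \neq y$, define $\nu$ to coincide with $\mu$ except that $\nu(i) := x$ and $\nu(j) := y$. Injectivity is immediate. For envy-freeness, if $y \succ_i x$ then $y \succ_i \mu(i)$, contradicting envy-freeness of $\mu_2$; symmetrically $x \not\succ_j y$. Envies involving a third agent $k \notin \{i,j\}$ are ruled out by envy-freeness of $\mu_1$ or $\mu_2$, because $\nu(k) = \mu(k) = \mu_1(k) = \mu_2(k)$, $\nu(i) = \mu_1(i)$, and $\nu(j) = \mu_2(j)$. Since $y \in \overline{M}_{\mu_1}$ and $x \in \overline{M}_{\mu_2}$, we conclude $\mu_1 \leadsto \nu$ and $\mu_2 \leadsto \nu$.

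Combining termination with local confluence via Newman's lemma yields confluence of $\leadsto^\ast$. In particular, any two maximal $\leadsto$-sequences starting at $\mu$ terminate at the same envy-free matching, which gives the uniqueness of the reformist envy-free matching with respect to $\mu$. I expect the main obstacle to be the careful envy-freeness bookkeeping in the last case, together with noticing the impossibility argument that disposes of the ``two distinct agents covet the same unassigned item'' sub-case: that is precisely the place where the full envy-free condition, rather than a mere Pareto-improvement condition, is decisive.
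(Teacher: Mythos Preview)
Your proof is correct and takes a genuinely different route from the paper's. The paper argues by direct contradiction: assuming two distinct reformist envy-free matchings $\sigma$ and $\tau$ with $\sigma(i) \succ_i \tau(i)$ for some agent $i$, it locates the \emph{first} step along the sequence $\mu = \sigma_0 \leadsto \cdots \leadsto \sigma_\ell = \sigma$ at which some agent overshoots her $\tau$-item, shows that this item is unassigned in $\tau$, and concludes that $\tau$ itself can be improved by giving that agent the overshot item, contradicting terminality of $\tau$. Your approach instead establishes the rewriting-theoretic structure of $\leadsto$: termination via a rank potential, plus local confluence via the three-way case split, and then invokes Newman's lemma. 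What you gain is modularity and a stronger structural byproduct (full confluence of $\leadsto^\ast$, not merely uniqueness of normal forms), which in particular makes the reachability question treated in the paper's corollary essentially immediate; what the paper's argument gains is self-containment, since it needs no external lemma and is marginally shorter once the ``first-overshoot'' index has been identified.
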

\begin{proof}
The existence is immediate from the definition. 
We prove the uniqueness.
Suppose to the contrary that there exist reformist envy-free matchings 
$\sigma$ and $\tau$ with 
respect to $\mu$ such that $\sigma \neq \tau$. 
Without loss of generality, we can assume that 
there exists an agent $i \in N$ such that 
$\sigma(i) \succ_{i} \tau(i)$. 
Suppose that for envy-free matchings
$\sigma_0, \sigma_1, \dots, \sigma_\ell$, we have 
$\mu = \sigma_0 \leadsto \sigma_1 \leadsto \dots
\leadsto \sigma_{\ell} = \sigma$. 
Since $\tau$ is a reformist envy-free 
matching with respect to $\mu$, 
$\tau(j) \succeq_j \sigma_0(j)$ holds 
for every agent $j \in N$. 
Let $t$ be the minimum integer in $\{1,2,\dots,\ell\}$ such that 
$\sigma_t(i) \succ_{i} \tau(i)$ for some agent $i \in N$. 
Then $\tau(j) \succeq_j \sigma_t(j)$ holds
for every agent $j \in N \setminus \{i\}$. 

If there is an agent $j \in N \setminus \{i\}$ such that 
$\tau(j) = \sigma_t(i)$, then 
$\tau(j) \succ_i \tau(i)$, which contradicts the assumption that $\tau$ is envy-free. 
This implies that $\tau(j)\neq \sigma_t(i)$ holds for 
every agent $j \in N \setminus \{i\}$, 
which means $\sigma_t(i) \in \overline{M}_\tau$.
Hence, under the matching $\tau$, the agent $i$ can exchange
$\tau (i)$ with $\sigma_t(i)$ to obtain another matching
$\tau^{\prime}$.
Since $\tau$ is a reformist envy-free matching, 
the resulting matching $\tau^{\prime}$ is not envy-free. 
That is, there is an agent $j \in N \setminus \{i\}$ 
such that 
$\tau^{\prime}(i) \succ_j \tau^{\prime}(j) = \tau(j)$. 
For such an agent $j \in N \setminus \{i\}$,
we have 
$\sigma_t(i) = \tau^{\prime}(i) \succ_j \tau(j) \succeq_j \sigma_t(j)$.
However, this means that 
the agent $j$ has envy for $i$ on $\sigma_t$, which 
contradicts the fact that $\sigma_t$ is envy-free.
This completes the proof. 
\end{proof}

Theorem \ref{thm:unique} has a consequence for the following reconfiguration question.
Namely, we are given two envy-free matchings $\mu$ and $\tau$, and 
asked to determine whether
$\tau$ is obtained from $\mu$ by the iterative improvement. 
\begin{corollary}
\label{cor:reconfig}
For two envy-free matchings $\mu,\tau$,
we can determine whether 
there exists a sequence of envy-free matchings 
$\mu=\mu_0, \mu_1, \dots, \mu_{\ell}=\tau$ such that 
$\mu_t \leadsto \mu_{t+1}$ for every integer 
$t \in \{0,1,\dots,{\ell}-1\}$. 
\end{corollary}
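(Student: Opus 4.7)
The plan is to reduce the reconfiguration question to a polynomial-time computation via a ``capped'' variant of the reformist process, followed by an adapted application of Theorem~\ref{thm:unique}. First observe that any sequence $\mu=\mu_0\leadsto\mu_1\leadsto\cdots\leadsto\mu_\ell=\tau$ is monotone: each step strictly improves the item of exactly one agent and leaves the others fixed, so $\tau(j)\succeq_j\mu(j)$ holds for every agent $j$. If this necessary condition fails, the algorithm answers ``no'' immediately.

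Assuming the condition holds, I would run the following \emph{capped reformist process} starting with $\mu'=\mu$: at each step pick any agent $i$ and any unassigned item $y\in\overline{M}_{\mu'}$ such that $\tau(i)\succeq_i y$ and $y\succ_i\mu'(i)$ and the matching obtained by reassigning $i$ to $y$ is envy-free, perform this exchange, and repeat until no such exchange remains. The procedure terminates in polynomial time because each step strictly improves some agent's item within a finite set. The algorithm answers ``yes'' if and only if the final matching equals~$\tau$.

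For correctness, $\tau$ is itself a terminal of the capped process because every agent already holds her capped top item. Conversely, if a sequence $\mu=\mu_0\leadsto\mu_1\leadsto\cdots\leadsto\mu_\ell=\tau$ exists in the original instance, monotonicity of $\leadsto$ forces every intermediate matching $\mu_t$ to satisfy $\tau(j)\succeq_j\mu_t(j)$ for all $j$; otherwise an agent strictly exceeding her $\tau$-value at some step could never return to it via further $\leadsto$ steps. Hence this sequence is itself a valid capped reformist sequence ending at~$\tau$.

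The main obstacle is showing that the capped terminal reached from $\mu$ is unique, and I expect the proof of Theorem~\ref{thm:unique} to transfer almost verbatim. Given two capped terminals $\alpha,\beta$ from $\mu$, assume without loss of generality $\alpha(i)\succ_i\beta(i)$ for some $i$, and take the minimum index $t$ along a capped sequence $\mu=\alpha_0\leadsto\cdots\leadsto\alpha_\ell=\alpha$ at which some agent exceeds $\beta$. The same case analysis as in Theorem~\ref{thm:unique} shows $\alpha_t(i)\in\overline{M}_\beta$, so moving $i$ in $\beta$ to $\alpha_t(i)$ is a legal \emph{capped} exchange because $\tau(i)\succeq_i\alpha_t(i)$. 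Its failure must then come from a witness $k\neq i$ with $\alpha_t(i)\succ_k\beta(k)$, and envy-freeness of $\alpha_t$ gives $\alpha_t(k)\succ_k\beta(k)$, contradicting the minimality of $t$.
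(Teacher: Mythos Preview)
Your proposal is correct and follows essentially the same approach as the paper. The only cosmetic difference is packaging: the paper deletes every item $x$ with $x\succ_i\tau(i)$ for some $i$ and then invokes Theorem~\ref{thm:unique} directly on the resulting instance, whereas you keep the instance intact and restrict moves via your ``capped'' process, re-running the uniqueness argument with the extra cap check $\tau(i)\succeq_i\alpha_t(i)$. These are equivalent (an item exceeding some agent's cap can never be occupied in a capped-reachable envy-free matching, so deleting it changes nothing), and your adapted uniqueness argument is the same contradiction as in Theorem~\ref{thm:unique}.
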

\begin{proof}
If there is an agent $i$ such that $\mu(i) \succ_i \tau(i)$, then
the answer is No.
Therefore, assume that $\tau(i) \succeq_i \mu(i)$ for every agent $i$.

The algorithm first removes each item $x$ from the instance if $x \succ_i \tau(i)$ for some agent $i\in N$. 
Then it computes the reformist envy-free matching $\sigma$ with respect to $\mu$.
If $\sigma = \tau$, then we know 
$\tau$ is reached from $\mu$ 
and the answer is Yes.
Otherwise (i.e., $\sigma \neq \tau$), there exists an agent $i$ such that $\tau(i) \succ_i \sigma(i)$ since all the items $x$ with $x \succ_i \tau(i)$ were already removed from the instance. 
Since the reformist envy-free matching with respect to $\mu$ is unique (Theorem \ref{thm:unique}), $\tau$ cannot be reached from $\mu$, and the answer must be No.
\end{proof}

\section{Shortest Reformist Sequence: Hardness}

To justify the study of the shortest reformist sequence problem,
we first give an example in which the maximum length of a reformist
sequence can be arbitrarily larger than the minimum length.

\begin{theorem}
\label{thm:exponential-example}
 For any positive integer $p$,
 there is an instance of the shortest reformist sequence 
 problem with three agents and $2p+3$ items such that there is 
 a reformist sequence of length $2p-1$ while the shortest reformist
 sequence has length at most four.
\end{theorem}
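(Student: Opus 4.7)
The plan is to exhibit an explicit instance witnessing the claimed gap. Take the agent set $N = \{1, 2, 3\}$ and the item set
\[
M = \{a_0, a_1, \ldots, a_{2p-1}, b, c, d\},
\]
of total size $2p+3$. Give agent $1$ the preference list $\succ_1 \colon a_0, a_1, \ldots, a_{2p-1}$, while agents $2$ and $3$ each accept a single item: $\succ_2 \colon b$ and $\succ_3 \colon c$. Define the initial envy-free matching $\mu$ by $\mu(1) = a_{2p-1}$, $\mu(2) = b$, and $\mu(3) = c$; the item $d$ is never assigned.

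The key design idea is that the preference lists of the three agents are pairwise disjoint, so envy-freeness is automatic for every matching consistent with these lists: no agent can envy another agent's item because it lies outside her acceptable set. In particular, only agent $1$ can ever perform an exchange, so the unique reformist envy-free matching is $\sigma$ with $\sigma(1) = a_0$, $\sigma(2) = b$, $\sigma(3) = c$, with uniqueness guaranteed by \cref{thm:unique}.

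To produce a reformist sequence of length $2p-1$, I would let agent $1$ exchange her item one rung at a time, moving successively through $a_{2p-1}, a_{2p-2}, \ldots, a_0$. Each intermediate matching is envy-free by the disjointness argument, and the sequence ends at $\sigma$, so this is a valid reformist sequence of length exactly $2p-1$. To bound the shortest reformist sequence by $4$, I would simply observe that agent $1$ can instead directly exchange $a_{2p-1}$ for $a_0$ in a single step, yielding $\sigma$ in one exchange.

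The only verification required is that each exchange preserves envy-freeness, which is immediate from the disjointness of preference lists; there is no real obstacle beyond identifying this construction.
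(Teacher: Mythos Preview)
Your construction is correct for the theorem as literally stated: with pairwise disjoint preference lists every matching is envy-free, agent~$1$ can either creep from $a_{2p-1}$ down to $a_0$ in $2p-1$ valid $\leadsto$-steps or jump there in a single step, and agents~$2$ and~$3$ never move. The dummy item $d$ pads the item count to exactly $2p+3$. So the instance witnesses the claimed gap.

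The paper's construction, however, is genuinely different, and the difference reflects what the theorem is meant to illustrate in context. In your instance the long sequence exists only because agent~$1$ voluntarily makes suboptimal exchanges; under the nomination model described just before the theorem---where a nominated agent always takes her \emph{best} envy-free item---your instance terminates in one step no matter who is nominated, so nomination order is irrelevant. The paper instead interleaves the preference lists of agents~$1$ and~$2$ so that, as long as agent~$3$ has not yet moved (and hence the bypass item $z$ is blocked by agent~$3$'s envy), agents~$1$ and~$2$ are \emph{forced} to crawl through $2p-2$ alternating steps even when each greedily takes her best available item; once agent~$3$ is nominated first, four steps suffice. Thus the paper's example shows a gap caused purely by nomination order under greedy behavior, which is exactly the phenomenon motivating the shortest reformist sequence problem. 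Your example proves the stated theorem but demonstrates only the weaker and essentially trivial fact that non-greedy exchanges can be arbitrarily wasteful.
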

\begin{proof}
 We construct a desired instance as follows.
 Let $N=\{1, 2, 3\}$ and 
 \begin{equation*}
 M=\{a_\ell, b_\ell \mid \ell=1,2,\dots, p\}\cup\{r, s, z\}.
 \end{equation*}
 We define the preferences of the $3$ agents as follows.
 \begin{align*}
   \succ_{1} &\colon a_p, b_p, a_{p-1}, b_{p-1}, \dots, a_2, b_2, a_1,\\
   \succ_{2} &\colon b_p, z, b_{p-1}, a_{p}, b_{p-2}, a_{p-1}, \dots, b_2, a_3, b_1,\\
   \succ_{3} &\colon r, z, s.
 \end{align*}
 We define the initial matching $\mu$ to be $\mu (1)=a_1$, $\mu (2) = b_1$, and $\mu (3)=s$.
 Then the reformist matching $\sigma$ with respect to $\mu$ is $\sigma(1)=a_p$, $\sigma(2)=b_p$, and $\sigma(3)=r$.
 See Figure~\ref{fig:expo}.

 We observe that we can reach $\sigma$ in four steps as follows: the agent $3$ exchanges $s$ with $r$, 
 the agent $2$ exchanges $b_1$ with $z$, the agent $1$ exchanges $a_1$ with $a_p$, and then the agent $2$ exchanges $z$ with $b_p$. 
 See Figure~\ref{fig:badexample1_opt}.
 
 On the other hand, if the agent $3$ is nominated after the agents $1$ and $2$, the number of steps to reach $\sigma$ is $2p-1$
 (see Figure~\ref{fig:badexample1_long}).
 In the beginning, only the agent $1$ can be nominated to exchange $a_1$ with $a_2$.
 Since $b_2$ receives no envy from the agent $1$ after the exchange, the agent $2$ can exchange $b_1$ with $b_2$.
 Then, $a_3$ has no envy from the agent $2$, implying that the agent $1$ can exchange $a_2$ with $a_3$.
 In such a way, for an integer $i \in \{1,2,\dots, p-1\}$,
 the $(2i-1)$-st step exchanges $a_i$ with $a_{i+1}$ for the agent $1$, and the $2i$-th step exchanges $b_i$ with $b_{i+1}$ for the agent $2$.
 In the end, the two agents reach $a_p$ and $b_p$.
 This transformation is unique, and the number of necessary steps is $2p-2$.
 Finally, the agent $3$ exchanges $s$ with $r$.
 Thus the total number of steps is $2p-1$.
\end{proof}

\begin{figure}[ht]
\centering
\includegraphics{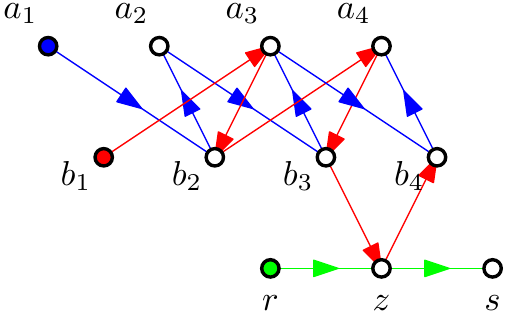}
\caption{Instance in the proof of  Theorem~\ref{thm:exponential-example} with $p=4$.
Colors represent labels, and colored vertices correspond to the items assigned to agents in the initial matching $\mu$.}
\label{fig:expo}
\end{figure}

\begin{figure*}[ht]
\centering
\resizebox{0.9\textwidth}{!}{\includegraphics{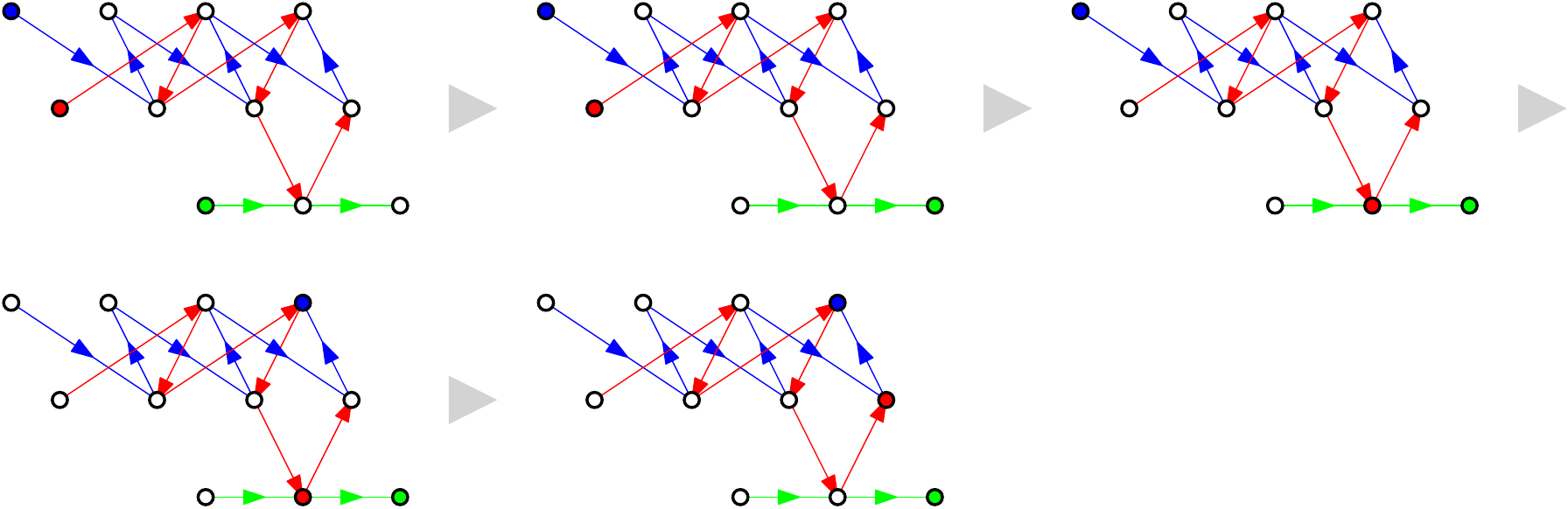}}
\caption{Shortest sequence for the instance in the proof of Theorem \ref{thm:exponential-example}.}
\label{fig:badexample1_opt}
\end{figure*}

\begin{figure*}[ht]
\centering
\resizebox{0.9\textwidth}{!}{\includegraphics{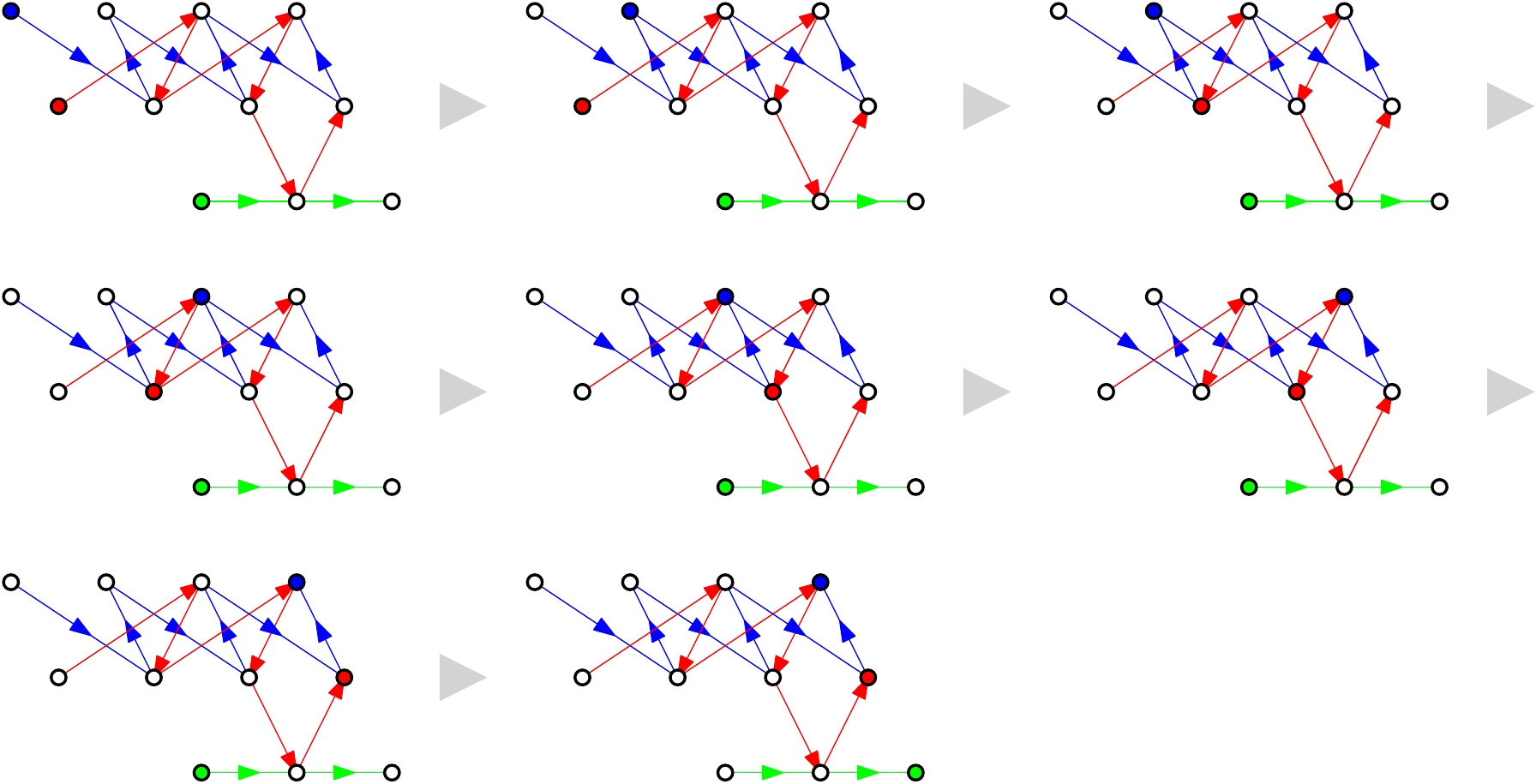}}
\caption{A long sequence for the instance in the proof of Theorem \ref{thm:exponential-example}.}
\label{fig:badexample1_long}
\end{figure*}

As it turns out, (the decision version of) the shortest reformist sequence problem is NP-complete.

\begin{theorem}\label{thm:NPc}
The decision version of the shortest reformist 
sequence problem is {\rm NP}-complete even when $m_i \leq 4$ 
for every agent $i\in N$ and 
$|\{i \in N \mid x \in M_i\}| \le 3$ for every item $x \in M$.
\end{theorem}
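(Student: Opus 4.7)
Membership in NP is clear: a certificate for a Yes-instance is the sequence $\mu_0, \mu_1, \dots, \mu_\ell$ itself, and for each index $t$ one can verify in polynomial time that $\mu_t$ is an envy-free matching, that $\mu_t \leadsto \mu_{t+1}$ (so a single agent exchanges her item with an unassigned one she strictly prefers), and finally that $\mu_\ell$ admits no further improving exchange, i.e.\ is a reformist envy-free matching with respect to $\mu_0$.

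For NP-hardness I would reduce from a bounded-occurrence variant of $3$-SAT, for example the variant in which every clause has three literals and every variable occurs a bounded number of times, which is well-known to be NP-hard. Given such a formula $\varphi$ on variables $x_1,\dots,x_n$ and clauses $C_1,\dots,C_m$, the plan is to build one variable gadget per $x_i$ and one clause gadget per $C_j$, shared via common items that play the role of literal occurrences. Each variable gadget would be patterned after the ``zipper'' construction in the proof of \cref{thm:exponential-example}: the two agents inside it can resolve their preferences along two mutually exclusive short paths, interpreted as $x_i=\mathrm{true}$ and $x_i=\mathrm{false}$, and exactly one of these routings frees a designated shared item for the clause gadgets corresponding to clauses satisfied by that literal. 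Each clause gadget would be designed so that it can be completed in a small constant number of exchanges whenever at least one of its three incident shared items has been freed, but is forced to take strictly more exchanges otherwise, because envy-freeness prevents the gadget's agents from moving around an item still held by a variable-gadget agent.

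With the length budget $\ell$ set to the sum over all gadgets of their minimum contributions, a reformist sequence of length at most $\ell$ would then exist if and only if the implicit truth assignment (read off from the direction chosen inside each variable gadget) satisfies every clause: the forward direction follows by concatenating the short routings, while the reverse direction follows from the fact that an unsatisfied clause gadget contributes at least one extra exchange that cannot be absorbed elsewhere.

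The main obstacle is respecting the very tight parameters $m_i \le 4$ and $|\{i\in N : x\in M_i\}|\le 3$. With at most three arcs per color and at most three colors per vertex, each gadget has extremely little slack, so the envy propagation must be engineered very carefully. In particular, the delicate part is to rule out ``parasitic'' shortcuts: one must check that no alternative ordering of nominations, inside or across gadgets, can beat the intended count. I expect this to reduce to a finite case analysis on the first exchange an adversary could attempt in each gadget, showing that any deviation either creates envy (and is thus forbidden) or forces a later detour of at least one additional step, so that the budget $\ell$ is exactly achievable precisely when $\varphi$ is satisfiable.
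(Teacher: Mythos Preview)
Your outline is a plausible high-level strategy, but it is not yet a proof, and in a couple of places the plan does not match the tools you invoke. The paper takes a different and more direct route: it reduces from \textsc{Vertex Cover} in $3$-regular graphs, not from SAT. This choice is not incidental. The $3$-regularity immediately caps how many gadgets share any given item, which is exactly what the constraint $|\{i:x\in M_i\}|\le 3$ demands; and because \textsc{Vertex Cover} asks only which vertices to \emph{select} (rather than which of two truth values to assign), the cost of a solution shows up simply as one extra intermediate step per selected vertex. Concretely, each vertex $v$ has an agent $v^1$ that may or may not detour through a private item $t_v$; taking the detour costs one extra exchange but releases the items $y_{e,v}$ for all $e\in\delta(v)$, so the edge gadget for $e=\{u,v\}$ can finish in its minimum number of steps iff at least one of $u,v$ took the detour. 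The exact optimal length is $|N|+|E|+k$, with $k$ the minimum cover size, and the preferences are written out explicitly so that $m_i\le 4$ and each item lies in at most three lists by inspection.

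Two specific gaps in your plan. First, the ``zipper'' of \cref{thm:exponential-example} is not a symmetric two-way switch: it offers one \emph{short} route (via the auxiliary item $z$) and one \emph{long} route, not two equal-length paths that could encode true versus false. A genuine binary variable gadget with two short branches that free disjoint literal items, under $m_i\le 4$, would be a new construction, and you have not supplied one. Second, the sentence ``I expect this to reduce to a finite case analysis'' is precisely where the content of the hardness proof lives: one must show that \emph{every} reformist sequence meeting the budget implicitly encodes a satisfying assignment (respectively, a vertex cover), by lower-bounding the contribution of each gadget and arguing that no cross-gadget ordering can save steps. The paper does this via explicit gadgets and the two claims in its proof; without concrete gadgets and that lower-bound argument, your proposal does not establish NP-hardness under the stated degree restrictions.
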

\begin{proof}
We first observe that the problem is in NP\@.
This is because one exchange strictly improves the current matching, and hence the maximum number of exchanges in the reformist sequence is at most $|M|\cdot |N|$.

We reduce the vertex cover problem in $3$-regular graphs to the decision version of the 
shortest reformist sequence problem.
In the vertex cover problem, we are given an 
undirected graph $G=(V, E)$ 
and a positive integer $k$, and we are 
asked to determine whether $G$ has 
a subset $S\subseteq V$ such that $|S| \le k$ and 
every edge $e \in E$ has one of its endvertices in $S$
(i.e., $S\cap e \neq \emptyset$).
Such a vertex subset $S$ is called a vertex cover.
It is known~\cite{K72} that the vertex cover problem is 
NP-complete even when a given graph is $3$-regular.
Let $G=(V, E)$ be a $3$-regular graph as an instance of the vertex cover problem.

We construct an instance of the decision version of the shortest reformist 
sequence problem as follows
(see Figure \ref{fig:npc1.pdf}).
For each edge $e \in E$, we prepare four agents 
$e^1,e^2,e^3,e^4$, and
for each vertex $v\in V$, 
we prepare eight agents 
$v^1,v^2,\dots,v^8$.
Thus, there are $4|E|+8|V|$ agents: 
\begin{equation*}
N :=\{e^\ell \mid e\in E, \, \ell \in \{1,2,3,4\}\} \cup 
\{v^\ell\mid v\in V, \, \ell \in \{1,2,\dots,8\}\}.
\end{equation*}
We set 
\begin{equation*}
M :=\{r_i, s_i\mid i\in N\} 
 \cup \{t_v\mid v\in V\}\cup \{y_{e, u}, y_{e, v}\mid e=\{u,v\}\in E\} 
 \cup \{x_{v, e} \mid v\in V, e\in \delta(v)\},
\end{equation*}
where $\delta (v)$ denotes the set of edges incident to $v$.
Note that $|\delta (v)|=3$ for every vertex $v\in V$ since $G$ is $3$-regular.

\begin{figure*}[ht]
\centering
\resizebox{0.9\textwidth}{!}{\includegraphics{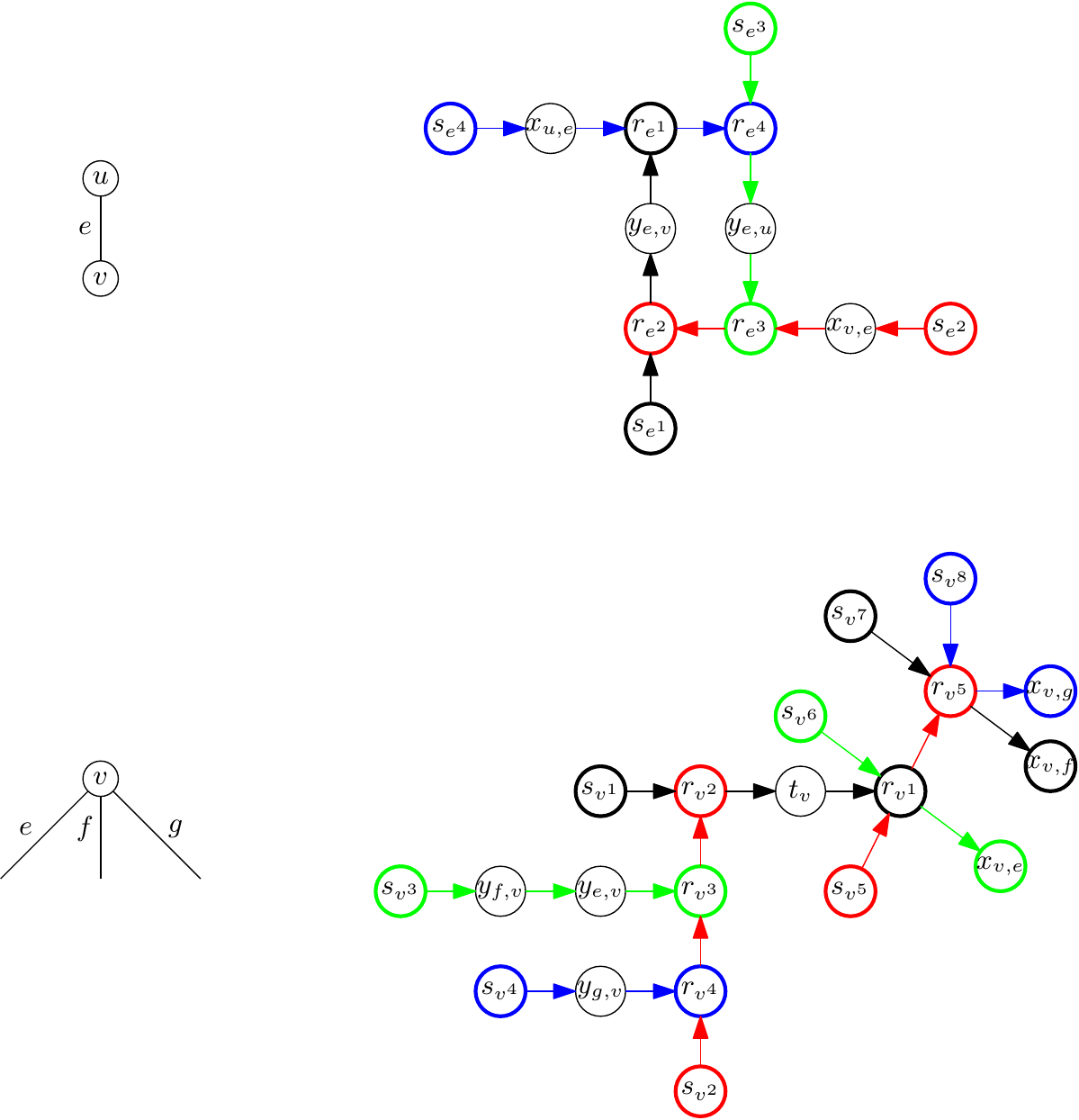}}
\caption{Reduction in the NP-completeness proof (\cref{thm:NPc}). The vertices with the same labels should be identified.}
\label{fig:npc1.pdf}
\end{figure*}

For each edge $e= \{u,v\} \in E$, the agents $e^1,e^2,e^3,e^4$
have the following preferences:
\begin{align*}
    \succ_{e^1} &\colon r_{e^1}, y_{e, v}, r_{e^2}, s_{e^1},&
    \succ_{e^2} &\colon r_{e^2}, r_{e^3}, x_{v, e}, s_{e^2},\\
    \succ_{e^3} &\colon r_{e^3}, y_{e, u}, r_{e^4}, s_{e^3},&
    \succ_{e^4} &\colon r_{e^4}, r_{e^1}, x_{u, e}, s_{e^4}.
\end{align*}
For each vertex $v\in V$ with $\delta (v)=\{e, f, g\}$, 
we define the preferences of the associated $8$ agents as follows:
\begin{align*}
    \succ_{v^1} &\colon r_{v^1}, t_{v}, r_{v^2}, s_{v^1},\ \ \ \ 
    \succ_{v^2} \colon r_{v^2}, r_{v^3}, r_{v^4}, s_{v^2},\\
    \succ_{v^3} &\colon r_{v^3}, y_{e, v}, y_{f,v}, s_{v^3},\ \ \ \ 
    \succ_{v^4} \colon r_{v^4}, y_{g, v}, s_{v^4},\\
    \succ_{v^5} &\colon r_{v^5}, r_{v^1}, s_{v^5},\\
    \succ_{v^6} &\colon r_{v^6}, r_{v^1}, s_{v^6},\quad \text{where\ } r_{v^6}=x_{v, e},\\
    \succ_{v^7} &\colon r_{v^7}, r_{v^5}, s_{v^7},\quad \text{where\ } r_{v^7}=x_{v, f},\\
    \succ_{v^8} &\colon r_{v^8}, r_{v^5}, s_{v^8},\quad \text{where\ } r_{v^8}=x_{v, g}.
\end{align*}
The initial matching $\mu$ is defined to be $\mu(i)=s_i$ for each 
agent $i \in N$.
Then by Claim \ref{clm:NPc1} below, a reformist envy-free matching
$\sigma$ with respect to $\mu$ is $\sigma(i)=r_{i}$ for each agent 
$i \in N$.
We observe that each agent $i \in N$ has a set $M_i$ 
of size at most four, and each 
item appears in $M_i$ for at most three agents $i \in N$.

\begin{claim}\label{clm:NPc1}
If $G$ has a vertex cover of size $k$, 
then there exists a reformist sequence 
of length $|N| + |E| + k$.
\end{claim}
  \begin{proof}
    Let $S$ be a vertex cover of size $k$ in $G$.
    Consider the following reformist sequence.
    \begin{enumerate}
      \item For each vertex $v\in S$, the agents $v^1,v^2,v^3, v^4$ 
      are nominated one by one as follows.
      The agent $v^1$ exchanges $s_{v^1}$ with $t_{v}$.
      Then $v^2$ exchanges $s_{v^2}$ with $r_{v^2}$, and $v^3$ and $v^4$ exchange $s_{v^3}$ and $s_{v^4}$ with $r_{v^3}$ and $r_{v^4}$, respectively.
      This takes $4$ steps for each vertex $v\in S$.
      \item For each edge $e = \{u,v\}\in E$, 
      the agents $e^1,e^2,e^3,e^4$ are nominated one by one.
      Since $S$ is a vertex cover, $u$ or $v$ belongs to $S$.
      By symmetry, suppose that $v \in S$.
      The agent $e^1$ exchanges $s_{e^1}$ with $y_{e, v}$, which can be done because $y_{e, v}$ has no envy from $v^3$ or $v^4$ due to Step 1.
      Then the agent $e^\ell$ exchanges $s_{e^\ell}$ with $r_{e^\ell}$ in the order of $\ell=2,3,4$.
      Finally, the agent $e^1$ exchanges $y_{e, v}$ with $r_{e^1}$.
      This takes $5$ steps for each edge $e\in E$.
      \item For each vertex $v\in V$ and each integer $\ell \in \{6,7,8\}$, 
      $v^\ell$ exchanges $s_{v^\ell}$ with $r_{v^\ell}$, and then $v^5$ exchanges $s_{v^5}$ with $r_{v^5}$.
      This can be done since $r_{v^\ell}$ has no envy from the other agents for each integer $\ell \in \{6,7,8\}$ due to Step 2.
      This takes $4$ steps for each vertex $v\in V$.
      \item For each vertex $v\in S$, $v^1$ exchanges $t_{v}$ with $r_{v^1}$.
      This takes $1$ step for each vertex $v\in S$.
      \item For each vertex $v\in V\setminus S$, the four agents $v^\ell$ exchange $s_{v^\ell}$ with $r_{v^\ell}$ in the order of $\ell=1,2,3,4$.
      This takes $4$ steps for each vertex $v\in V\setminus S$.
    \end{enumerate}
    The total number of steps in the reformist sequence is
    $4k + 5|E| + 4|V| + k + 4(|V|-k) = 8|V|+5|E|+k$.
    Since $|N|=8|V|+4|E|$, this is equal to $|N| + |E| +k$.
  \end{proof}

  \begin{claim}\label{clm:NPc2}
  If there exists a reformist sequence of length $|N| + |E| +k$, then 
  $G$ has a vertex cover of size $k$.
  \end{claim}
  \begin{proof}
    Consider a reformist sequence with minimum length.
    We first observe the following because of the minimality.
    \begin{itemize}
       \item For each vertex $v\in V$, the agents 
       $v^2,\dots,v^8$
       exchange $s_{v^\ell}$ with $r_{v^\ell}$ in the reformist
       sequence, since moving to an intermediate item is redundant, 
       i.e., moving to an intermediate item does not improve the 
       situation of the other agents.
       We note that the agent $v^1$ may use $t_{v}$.
       Thus, for each vertex $v\in V$, we spend eight or nine steps.
       \item For each edge $e=\{u, v\}\in E$, the agent $e^2$~($e^4$, 
       resp.,) exchanges $s_{e^2}$ with $r_{e^2}$~($s_{e^4}$ with 
       $r_{e^4}$, resp.,) in the sequence.
       Moreover, only one of $y_{e, u}$ or $y_{e, v}$ must be used in 
       the sequence.
       Thus, for each edge $e\in E$, we spend exactly $5$ steps.
    \end{itemize}
    Define $S$ as the set of $v\in V$ such that 
    the agent $v^1$ possesses $t_{v}$ at some point.
    Then the number of steps is $8|V|+|S|+5|E|$.
    By the assumption with $|N|=8|V|+4|E|$, 
    it follows that $|S|\leq k$.
    
    We will claim that $S$ is a vertex cover of $G$.
    Indeed, suppose to the contrary that $S$ is not a vertex cover.
    Then there is some edge $e=\{u, v\}\in E$ such that $u\not\in S$ and $v\not\in S$.
    That is, neither $t_{u}$ nor $t_{v}$ is used in the sequence.
    This means that, in the sequence, $y_{e, u}$ and $y_{e, v}$ always have envy from one of $u^\ell$'s and $v^\ell$'s, respectively, and hence neither the agents $e^1$ nor $e^3$ can exchange items, which is a contradiction.
  \end{proof}
  By Claims~\ref{clm:NPc1} and~\ref{clm:NPc2}, the vertex cover problem in $3$-regular graphs is reduced to the decision version of the 
shortest reformist sequence problem, which completes the proof.
\end{proof}

\section{Shortest Reformist Sequence: Algorithms}

\subsection{Preprocessing}
\label{sec:preprocessing}

Here we present some basic observations for the shortest reformist 
sequence problem.
Suppose that $\mu$ is an initial envy-free matching.
Then, as mentioned in Section~\ref{sec:uniqueness}, the reformist matching with respect to $\mu$ can be found in polynomial time, which is denoted by $\sigma$.  
If $\mu(i)\succ_i x$ for some $i\in N$ and $x \in M_i$, then we can 
remove $x$ from $M_i$ because $i$ never 
envies an agent having 
$x$. 
If $x \succ_i \sigma(i)$ for some $i\in N$ and $x \in M_i$, 
then we can remove $x$ from the instance because 
$i$ always envies an agent having $x$. 
Hence, we may assume that $\sigma(i) \succeq_i x \succeq_i \mu(i)$ for 
every item $x \in M_i$. 
We may also assume that $\sigma (i)\succ_i \mu (i)$ for every agent 
$i\in N$, as we can simply remove agents $i$ with 
$\sigma (i) = \mu (i)$.
This implies that the length of every reformist sequence must be at 
least $n$.

We denote 
$S = \{ \mu (i)\mid i\in N\}$ and $R = \{ \sigma (i)\mid i\in N\}$.
Then 
$S\cap R=\emptyset$ holds.
In fact, suppose that there exist two agents $i, j$ such that 
$\mu (i)=\sigma (j)$.
Then, $\mu(i) \succ_j \mu(j)$ since $\sigma(j) \succeq_j \mu(j)$ and
$\mu(i) \neq \mu(j)$. However, this contradicts the envy-freeness of
$\mu$.

Those assumptions can be ensured in polynomial time, and thus
in the sequel we assume that given instances satisfy those properties.

\subsection{Preferences of length three}

While \cref{thm:NPc} says 
the shortest reformist sequence problem is NP-hard
when each agent has at most four acceptable items,
we show that, if each agent has at most three acceptable items, then the shortest reformist sequence problem is polynomial-time solvable.

\begin{theorem}
\label{thm:atmost3items}
If $m_i \le 3$ for every agent $i \in N$,
then a shortest reformist sequence can be found in polynomial time.
\end{theorem}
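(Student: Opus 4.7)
The plan is to show that, after the preprocessing described in Section~\ref{sec:preprocessing}, every shortest reformist sequence has length exactly $n$, and one such sequence can be constructed explicitly in polynomial time by scheduling each agent to perform a single ``direct'' exchange $\mu(i)\to\sigma(i)$. After preprocessing, every agent $i$ has $\sigma(i)\succ_i\mu(i)$ and $m_i\in\{2,3\}$; when $m_i=3$, let $z_i$ denote the unique middle item, so that $\sigma(i)\succ_i z_i\succ_i\mu(i)$. Recall also that $S\cap R=\emptyset$, and in particular $z_i\notin S$. Since every agent needs at least one exchange, the length is at least $n$; the rest of the proof establishes the matching upper bound.

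Introduce the digraph $H$ on $N$ with an arc $j\to i$ whenever $m_j=3$ and $z_j=\sigma(i)$. The first observation is an envy characterization: given an envy-free matching $\pi$ with $\pi(i)=\mu(i)$ and $\sigma(i)$ unassigned, the exchange $\pi(i)\to\sigma(i)$ yields an envy-free matching if and only if $\pi(j)=\sigma(j)$ for every in-neighbor $j$ of $i$ in $H$. Indeed, envy against $i$ after the move requires some $j$ with $\sigma(i)\in M_j$ and $\sigma(i)\succ_j\pi(j)$; since $\sigma(i)\notin S$ and $\sigma(i)\neq\sigma(j)$, the only possibility is $\sigma(i)=z_j$, i.e., $j\to i$ in $H$, and the envy is avoided exactly when $\pi(j)=\sigma(j)$.

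The main step is to prove that $H$ is acyclic. Suppose instead that $i_1\to i_2\to\cdots\to i_p\to i_1$ is a directed cycle, so $z_{i_k}=\sigma(i_{k+1})$ (indices modulo $p$). Fix any reformist sequence from $\mu$ to $\sigma$, which exists by \cref{thm:unique}, and let $t_k$ be the step at which $i_k$ makes its last exchange, necessarily landing on $\sigma(i_k)$. Just before time $t_k$, the item $\sigma(i_k)=z_{i_{k-1}}$ is unassigned, so $i_{k-1}$ is at either $\mu(i_{k-1})$ or $\sigma(i_{k-1})$; if it were the former, the matching right after time $t_k$ would have $i_{k-1}$ envying $i_k$, because $z_{i_{k-1}}=\sigma(i_k)\succ_{i_{k-1}}\mu(i_{k-1})$, contradicting envy-freeness. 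Hence $i_{k-1}$ is at $\sigma(i_{k-1})$ at that moment, so $t_{k-1}<t_k$. Applied around the cycle this gives $t_1<t_2<\cdots<t_p<t_1$, a contradiction.

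Finally, compute a topological order $i_1,\ldots,i_n$ of $H$ and execute the direct move for $i_k$ on the $k$-th step. By induction on $k$, each intermediate matching is envy-free: just before step $k+1$, the item $\sigma(i_{k+1})$ is unassigned (no earlier agent lies on $\sigma(i_{k+1})$ by injectivity of $\sigma$, and no later agent lies on $\sigma(i_{k+1})$ because $S\cap R=\emptyset$), and every in-neighbor of $i_{k+1}$ in $H$ has already been moved to its $\sigma$-image, so the envy characterization above yields envy-freeness. After $n$ steps the matching is $\sigma$, so we obtain a reformist sequence of length $n$, matching the lower bound. The whole procedure (preprocessing, constructing $H$, topological sort, executing the moves) runs in polynomial time. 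The main obstacle is the acyclicity of $H$, which combines envy-freeness with the guaranteed reachability of $\sigma$ to derive a cyclic order contradiction.
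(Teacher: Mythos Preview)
Your proof is correct and follows essentially the same approach as the paper: after preprocessing, both build a dependency digraph on agents (your $H$ is the reverse of the paper's $D$, with an arc whenever one agent's middle item equals another's $\sigma$-item), argue it is acyclic, and then perform the direct moves $\mu(i)\to\sigma(i)$ in topological order to obtain a length-$n$ sequence. The paper asserts acyclicity in one line (``since $\sigma$ is a reformist envy-free matching''), whereas you supply the full last-move-time argument; otherwise the two proofs coincide.
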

\begin{proof}
We prove that there is 
a reformist sequence of length $n=|N|$, and such a sequence can be found in polynomial time.
Since $n$ is a lower bound on the length of a reformist sequence~(see Section~\ref{sec:preprocessing}), the obtained sequence is optimal.
Let $\mu$ and $\sigma$ be an 
initial envy-free matching and 
the reformist envy-free matching with respect to $\mu$, respectively. 

As mentioned in Section~\ref{sec:preprocessing}, we may assume that 
$\{\sigma(i)\mid i\in N\}$ and $\{\mu(i)\mid i\in N\}$ are disjoint.
Thus, we may assume that each agent $i\in N$ has preferences 
$\sigma (i) \succ_i b(i) \succ_i \mu(i)$ by appending a dummy 
item $b(i)$ if $m_i<3$.

We claim that there exists an agent $i \in N$ 
such that $i$ can exchange $\mu (i)$ with $\sigma (i)$ keeping 
envy-freeness.
If this claim is true, by repeatedly finding such an agent and 
removing her, we can find a reformist sequence of length $n$, 
which completes the proof.

To find such an agent, we construct an auxiliary directed graph 
$D = (N, A)$ where, for each pair $i,j \in N$, an arc $(i,j) \in A$ 
exists if and only if $\sigma (i) = b(j)$
(see \figurename~\ref{fig:algo_lengththree1}). 
The existence of the arc $(i,j)$ means that $i$ 
cannot exchange to $\sigma (i)$ before $j$ gets $\sigma (j)$. 
Since $\sigma$ is a reformist envy-free matching with respect 
to $\mu$, 
there does not exist a directed cycle in $D$, that is, $D$ is acyclic. 
Hence $D$ has a sink $i\in N$. 
Since $\sigma (i)\neq b(j)$ for every agent $j\in N$, 
the agent $i$ can exchange $\mu (i)$ with $\sigma (i)$
(see \figurename~\ref{fig:algo_lengththree2}).
Thus the claim follows.
\end{proof}

\begin{figure*}[ht]
    \centering
    \resizebox{0.9\textwidth}{!}{\includegraphics{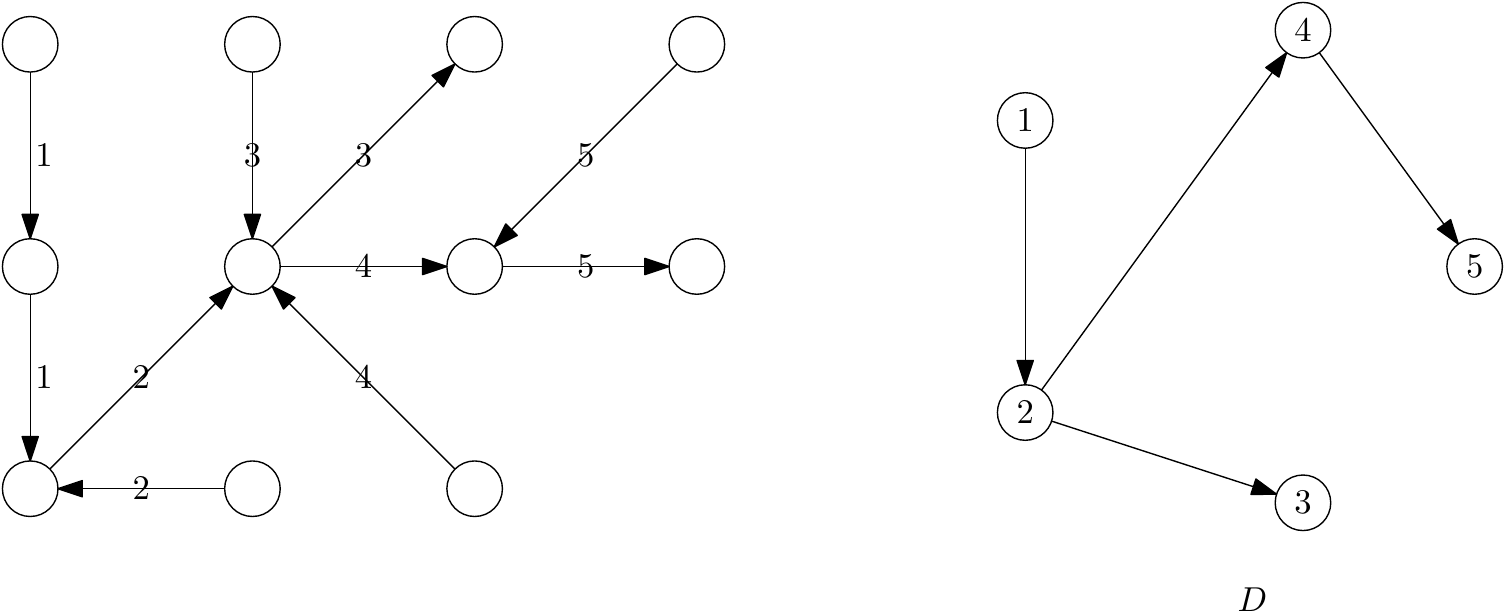}}
    \caption{Algorithm for \cref{thm:atmost3items}. (Left) An instance. (Right) The construction of the directed graph $D$.}
    \label{fig:algo_lengththree1}
\end{figure*}

\begin{figure*}[ht]
    \centering
    \resizebox{0.9\textwidth}{!}{\includegraphics{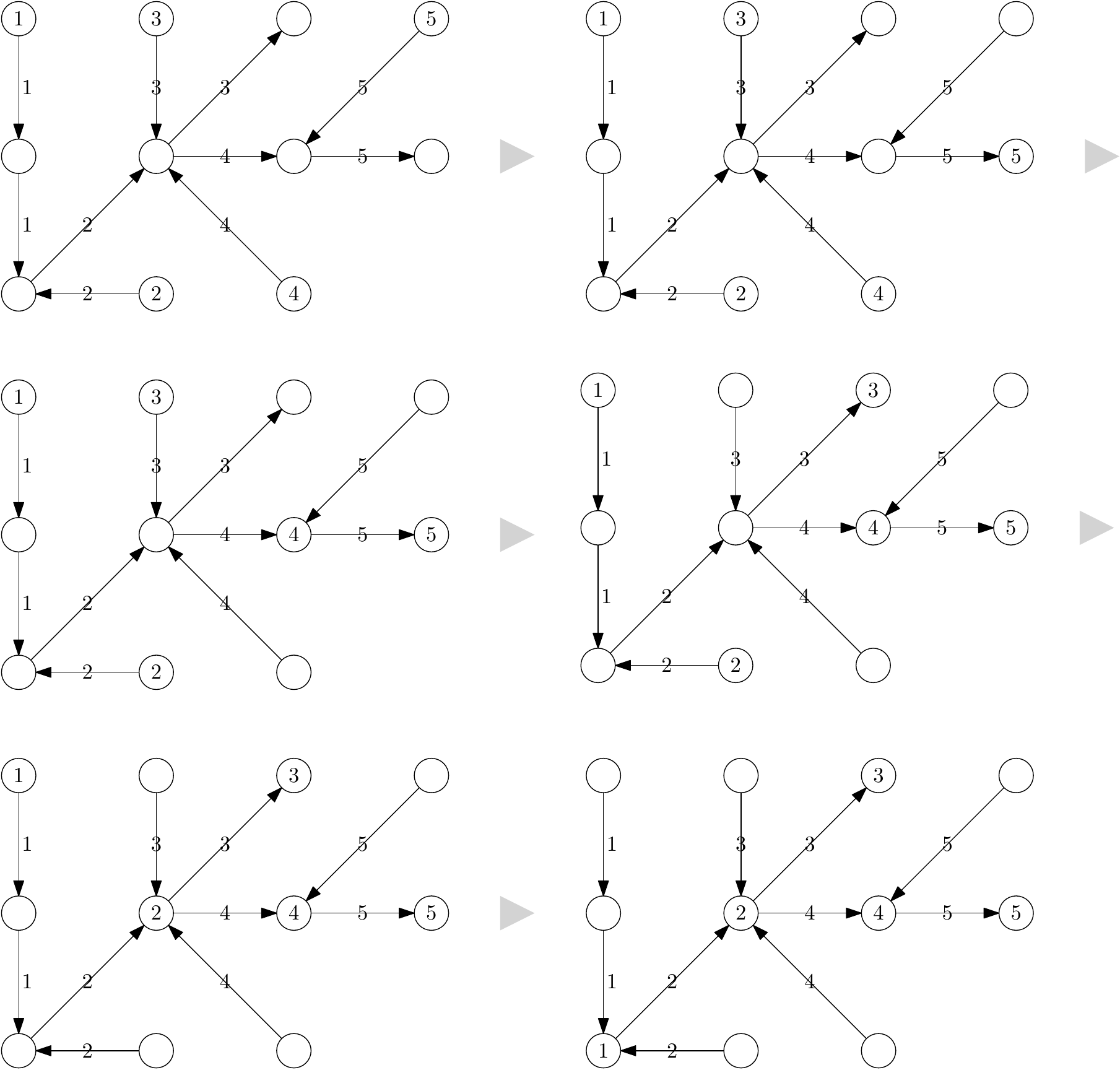}}
    \caption{Algorithm for \cref{thm:atmost3items}.
    A shortest reformist sequence following the order $5, 4, 3, 2, 1$.}
    \label{fig:algo_lengththree2}
\end{figure*}

\subsection{Items are acceptable to at most two agents}

By \cref{thm:NPc}, 
the shortest reformist sequence problem is NP-hard
when each item is acceptable to at most three agents.
Here,
we show that, if every item is acceptable to at most two agents,
then the shortest reformist sequence problem is polynomial-time solvable.

\begin{theorem}\label{thm:<=2}
If $|\{ i \in N \mid x \in M_i \}| \leq 2$ for every item $x \in M$,
then we can obtain a shortest reformist sequence in polynomial time.
\end{theorem}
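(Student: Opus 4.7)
My plan is to devise a polynomial-time algorithm by exploiting the structural consequences of the ``at most two agents per item'' hypothesis. I would first apply the preprocessing of Section~\ref{sec:preprocessing}, so that $\sigma(i)\succeq_i x\succeq_i\mu(i)$ for every $x\in M_i$, $\sigma(i)\succ_i \mu(i)$ for every $i$, and $S\cap R=\emptyset$. A key early observation is that each bottom item $\mu(i)$ lies in only the single list $M_i$: if $\mu(i)\in M_j$ with $j\neq i$, then $\mu(i)\succ_j\mu(j)$ (because $\mu(j)$ is the bottom of $M_j$ after preprocessing), contradicting envy-freeness of $\mu$. Hence $\mu$-items impose no scheduling constraints, a crucial simplification.

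Next, for each agent $i$ such that $\sigma(i)\in M_j$ for some $j\neq i$, let $\pi(i)$ be the (unique) such $j$; otherwise let $\pi(i)=\bot$. Define the digraph $D=(N,\{(\pi(i),i):\pi(i)\neq\bot\})$. Since every vertex has in-degree at most one, each connected component of $D$ is either an in-tree rooted at a vertex with $\pi=\bot$ or an in-tree hanging off a single directed cycle. For each acyclic component I would schedule the agents in reverse topological order, so that every agent performs her single direct move $\mu(i)\to\sigma(i)$ envy-freely: the only potential envier $\pi(i)$ has already advanced to $\sigma(\pi(i))\succ_{\pi(i)}\sigma(i)$. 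This contributes exactly one move per such agent, matching the lower bound of $n$ globally.

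For a cyclic component the situation is subtler, because every agent on a cycle is initially blocked from her direct move, forcing at least one intermediate stop; moreover, the intermediate item used to break a cycle may itself be shared with a third agent, propagating the need for additional intermediate stops along chains of shared middle items. I would handle this by constructing a polynomial-size auxiliary directed acyclic graph $H$ whose vertices represent the relevant partial configurations of agents in and adjacent to the cyclic component, and whose arcs correspond to single envy-free moves; a shortest path in $H$ from the initial configuration to $\sigma$ then yields the minimum number of moves for the component. The 2-per-item hypothesis is essential here: each shared item is contested by only two agents, so the set of configurations that need to be tracked is bounded polynomially in the input.

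Finally I would establish optimality via a matching lower bound by showing that any shorter sequence would leave a cyclic dependency unresolved at some intermediate matching and hence violate envy-freeness, using an exchange argument that normalizes an arbitrary reformist sequence into a form matching the algorithm's output. The main obstacle I anticipate is the design of $H$ for the cyclic components: precisely enumerating the relevant configurations, verifying the polynomial bound on its size, and proving a tight correspondence between shortest paths in $H$ and shortest reformist sequences. This step will require careful bookkeeping of how intermediate stops propagate through chains of shared middle items, leveraging the 2-per-item bound to keep the combinatorial explosion under control.
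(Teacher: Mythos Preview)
Your preprocessing and the observation that each $\mu(i)$ lies only in $M_i$ are correct, and your handling of the acyclic components of $D$ is fine: processing parents before children yields one direct move per agent there. The substance of the theorem, however, lies entirely in the cyclic case, and there your plan has a genuine gap.

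First, the decomposition by weakly connected components of $D$ is too coarse. The arcs of $D$ record only the constraints coming from the items $\sigma(i)$, but intermediate items impose constraints as well, and these can link agents across different $D$-components. If $x\in M_i\cap M_j$ with $x\notin\{\sigma(i),\sigma(j)\}$, then $i$ and $j$ may lie in different components of $D$, yet neither can move to $x$ while the other is still below $x$. Two distinct cycles of $D$ can therefore interact through shared intermediate items and cannot be solved independently; so $H$ cannot be built ``per cyclic component'' as you suggest.

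Second, even for a single cyclic component, ``relevant partial configurations'' is not an algorithm: you have not said which configurations are relevant, and the naive state space $\prod_i |M_i|$ is exponential. The $2$-per-item bound does not by itself make the set of reachable envy-free matchings polynomial, since chains of shared intermediate items can be long and interleaved. You would need a structural lemma singling out a polynomial family of configurations, and nothing in the proposal indicates what that lemma should be.

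The paper takes a different route that avoids configuration graphs altogether. It introduces a relaxation parameterized by a partition $\mathcal{N}$ of $N$ into groups and, for each agent $i$, a target set $L_i$ (an upper segment of $M_i$): once some agent in a group reaches her target, everyone in that group stops envying agents in other groups, and the goal becomes reaching a matching where every group has an agent in her target. The original problem is the case of singleton groups with $L_i=\{\sigma(i)\}$. The algorithm builds a digraph on the groups (an arc from the group of $i$ to that of $j$ whenever $M_i\cap L_j\neq\emptyset$), takes a source strongly connected component $S$, merges its groups into one, and enlarges the target sets accordingly. One then proves that the merge reduces the optimum by exactly $|S|$ and strictly decreases the potential $\sum_i|M_i\setminus L_i|+|N|$, so polynomially many recursive calls suffice. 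The $2$-per-item hypothesis is used not to bound a state space but to guarantee that each blocked target item has a \emph{unique} blocker, which is what makes the optimum drop by precisely $|S|$.
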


To this end,
we introduce a slightly generalized version of the original problem.
Let $\mu$ be an initial envy-free matching and $\sigma$ the reformist envy-free matching with respect to $\mu$.
For each agent $i\in N$, we are given an item $x_i \in M_i$, and define
$L_i := \{x \in M_i \mid x \succeq_i x_i\}$
as the {\it target set} of $i$.
Note that $L_i$ consists of 
the best $|L_i|$ items with respect to $\succ_i$. 
Denote $\mathcal{L} := \{L_i \mid i \in N\}$.
In addition,
we are given a partition $\mathcal{N} := \{N_1,N_2,\dots,N_k\}$ 
of $N$.

We generalize the concept of envy according to the target sets 
$\mathcal{L}$ and the partition $\mathcal{N}$ as follows.
Suppose that an agent $i$ is in $N_a$ and an agent $j$ is in 
$N_b$.
For a matching $\mu'$, we say that $i$ has 
\emph{$(\mathcal{L}, \mathcal{N})$-envy} for $j$ on $\mu'$ 
if
\begin{equation*}
\begin{cases}
\mu'(j) \succ_i \mu'(i) & \mbox{if $a = b$}, \\
\mu'(j) \succ_i \mu'(i) \mbox{ and } 
\mbox{$\mu'(i^{\prime}) \notin L_{i^{\prime}}$ ($\forall i^{\prime} \in N_a$)} 
& \mbox{if $a \neq b$}.
\end{cases}
\end{equation*}
The definition says that an agent $i$ has envy for $j$ if the 
agent $i$ prefers $\mu' (j)$ to $\mu' (i)$, except in the 
case when $i$ and $j$ are in different groups and some agent 
$i'$ in the same group as $i$ has an item in $L_i$.
In other words, if some agent $i^\prime$ is assigned an item in 
her target set $L_{i^\prime}$, 
then agents in the same group as $i^\prime$ have no envy for agents in the other groups.
A matching $\mu'$ is said to be \emph{$(\mathcal{L}, \mathcal{N})$-envy-free} if 
every agent $i \in N$ has no $(\mathcal{L}, \mathcal{N})$-envy for any agent $j \in N \setminus \{i\}$ on $\mu'$.
An $(\mathcal{L}, \mathcal{N})$-envy-free matching $\mu'$ is
said to be \emph{satisfactory}
if, for each index $a \in \{1,2,\dots,k\}$,
there is an agent $i \in N_a$ such that $\mu'(i) \in L_i$.
Since an envy-free matching is $(\mathcal{L}, \mathcal{N})$-envy-free and the reformist envy-free matching is satisfactory,
there always exists a sequence $\mu \leadsto \dots \leadsto \sigma'$ of $(\mathcal{L}, \mathcal{N})$-envy-free matchings
from $\mu$ to some satisfactory matching $\sigma'$.
In what follows, we consider the problem of
finding such a sequence with minimum length.

The following theorem shows that the problem defined above can be solved in polynomial time
if every item is acceptable by at most two agents.
\begin{theorem}\label{thm:<=2:g}
If $|\{ i \in N \mid x \in M_i \}| \leq 2$ for every item $x \in M$,
then a shortest sequence of $(\mathcal{L}, \mathcal{N})$-envy-free matchings to some satisfactory matching can be found in polynomial time.
\end{theorem}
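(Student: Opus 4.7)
The plan is to reduce the problem to a polynomial-time-solvable combinatorial problem by exploiting the structure imposed by the at-most-two-admirers condition, analogous to how Theorem~\ref{thm:atmost3items} exploits the at-most-three-items condition. The $(\mathcal{L}, \mathcal{N})$-envy-freeness relaxation is precisely the enabler for a recursive approach: once a single agent $i^{*} \in N_a$ holds a target item in $L_{i^{*}}$, the $(\mathcal{L}, \mathcal{N})$-envy constraints from agents of $N_a$ to agents outside $N_a$ vanish, so one can recurse on the residual instance where group $a$ and its currently held items are removed.

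After applying the preprocessing from Section~\ref{sec:preprocessing} we may assume that $\{\mu(i) : i \in N\}$ and $\{\sigma(i) : i \in N\}$ are disjoint and that every $x \in M_i$ satisfies $\sigma(i) \succeq_i x \succeq_i \mu(i)$. Because each item is accepted by at most two agents, items correspond to edges (or pendant half-edges) of a multigraph $H$ on the agent set $N$. I would first establish a local characterization of one-step $(\mathcal{L}, \mathcal{N})$-envy-free swaps in $H$: an admissible swap by $i \in N_a$ only needs to avoid creating envy from agents that share an item with $i$, and once some agent in $N_a$ already holds a target item, all cross-group envy constraints from $N_a$ are lifted. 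This local characterization is what makes each step verifiable in polynomial time and what drives the recursion.

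The algorithm then enumerates, for each group $a$, a candidate \emph{pioneer} pair $(i_a, y_a)$ with $i_a \in N_a$ and $y_a \in L_{i_a}$, representing the first agent of $N_a$ to reach a target item and the specific item she reaches. There are only polynomially many such candidate pairs per group. For each fixed candidate, computing the minimum number of swaps needed to move $y_a$ onto $i_a$ while maintaining $(\mathcal{L}, \mathcal{N})$-envy-freeness reduces to a shortest-walk problem in $H$: each swap moves one token along an edge, and the feasibility side-conditions are local thanks to the at-most-two hypothesis. Once a pioneer is realized, the algorithm shrinks the instance by deleting group $a$ and recurses on the $k-1$ remaining groups, aggregating the costs.

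The main obstacle will be the exchange argument needed for correctness of the enumerate-and-recurse scheme: one must show that some shortest sequence decomposes cleanly into ``the swaps that produce the first pioneer $(i_{a^{*}}, y_{a^{*}})$, followed by a shortest sequence for the residual instance.'' This requires reordering swaps belonging to different groups so that those of the first-satisfied group come first, without increasing the total length and without violating $(\mathcal{L}, \mathcal{N})$-envy-freeness at any intermediate point. The at-most-two-admirers condition is precisely what makes this reordering tractable, since swaps belonging to different groups interact only at items shared by two specific agents, and envy relations across groups become inert as soon as a pioneer appears. Establishing this decomposition lemma, together with bounding the number of enumerated candidates at each recursion level, will give the claimed polynomial-time algorithm.
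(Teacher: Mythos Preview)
Your outline has two genuine gaps that make the proposed algorithm neither polynomial nor correct as stated.

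\textbf{Running time.} Your enumerate-and-recurse scheme branches at every level: for each of the polynomially many pioneer candidates $(i_a,y_a)$ you recurse on an instance with one fewer group, and take the minimum. That yields a recursion of depth $k$ with polynomial branching factor, so total work $(\mathrm{poly}(n,m))^{k}$. Since $k$ can be as large as $n$, this is exponential. Nothing in your proposal collapses this branching.

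\textbf{The decomposition lemma fails in the stated form.} You claim one can reorder so that ``the swaps that produce the first pioneer $(i_{a^*},y_{a^*})$'' come first and then recurse on the residual instance with group $a^*$ deleted. But producing that pioneer may \emph{require} moves by agents in other groups. Concretely, if $y_{a^*}\in L_{i_{a^*}}$ is also acceptable to some $j\in N_b$ with $b\neq a^*$, then for $i_{a^*}$ to take $y_{a^*}$ without $(\mathcal{L},\mathcal{N})$-envy from $j$, either $j$ has already moved past $y_{a^*}$, or group $b$ already has a pioneer. Since $a^*$ is the \emph{first} satisfied group, the latter is impossible, so $j$ must move first. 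That move belongs to group $b$ and cannot be charged to group $a^*$, nor is it accounted for after you ``delete group $a^*$ and recurse''. Worse, these cross-group blocking relations can be cyclic among groups, so no single group can be isolated as you propose. The reduction to a ``shortest-walk problem in $H$'' is similarly underspecified: feasibility of each swap depends on the current positions of all other agents, not just on a walk of a single token.

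The paper resolves both issues by a different mechanism. It builds a directed graph $D$ on the groups, with an arc $(a,b)$ when some agent in $N_a$ can block a target item of some agent in $N_b$, and takes a \emph{source strongly connected component} $S$ of $D$. Instead of enumerating pioneers, it merges all groups in $S$ into a single group and \emph{enlarges} the target sets of agents in $N_S$ to $L_i^* = \{x\in M_i : x\succeq_i x_i^*\}$ where $x_i^*$ is the worst item of $\bigcup_{i\in N_S}L_i$ that $i$ accepts. Two claims then do the work: the potential $\sum_i |M_i\setminus L_i|$ strictly drops, and the optimal length changes by exactly $|S|$. This is a deterministic reduction with no branching, the recursion depth is bounded by $\sum_i|M_i|+|N|$, and the cyclic-dependency problem is absorbed by treating the whole SCC at once rather than one group at a time.
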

We remark that, if $|N_a| = 1$ for every 
$a \in \{1,2,\dots, k\}$ and 
$|L_i|=1$ for every agent $i \in N$,
the above problem is equivalent to the shortest reformist sequence problem.
Thus, \cref{thm:<=2} immediately follows from \cref{thm:<=2:g}.

\begin{proof}[Proof of \cref{thm:<=2:g}]
We denote by $\ell\left(\mathcal{L}, \mathcal{N}\right)$ 
the length of a shortest sequence from an initial matching $\mu$ to some satisfactory matching.
If the value
$\sum_{i \in N} |M_i \setminus L_i|$
is zero,
then $\mu$ is satisfactory, which implies $\ell\left(\mathcal{L}, \mathcal{N}\right) = 0$.

We assume $\sum_{i \in N} |M_i \setminus L_i| > 0$.
In the following,
we construct in polynomial time a new instance $(\mathcal{L}^{\ast}, \mathcal{N}^{\ast})$ with the set $N^{\ast}$ of agents, 
where $\mathcal{N}^{\ast}$ is a partition of $N^{\ast}$ and $\mathcal{L}^{\ast} := \{L_i^{\ast} \mid i \in N^{\ast}\}$ is the set of the target sets $L_i^{\ast}$,  
satisfying the following two conditions.
\begin{description}
\item[C1.]
$\sum_{i \in N^{\ast}} |M_i \setminus L_i^{\ast}| + |N^{\ast}| < \sum_{i \in N} |M_i \setminus L_i| + |N|$.
\item[C2.]
$\ell\left(\mathcal{L}^{\ast}, \mathcal{N}^{\ast}\right) = \ell\left(\mathcal{L}, \mathcal{N}\right) - c$,
where the value $c$ can be computed in polynomial time from the original instance $\left(\mathcal{L}, \mathcal{N}\right)$.
\end{description}
If such an instance can be constructed, then 
we can obtain $\ell\left(\mathcal{L}, \mathcal{N}\right)$
in polynomial time by recursive computation:
the condition~C1 implies that the number of recursive calls is bounded by $\sum_{i \in N} |M_i| + |N|$;
the condition~C2 verifies that each recursive call can be preformed in polynomial time
and that we can compute $\ell\left(\mathcal{L}, \mathcal{N}\right)$
from $\ell\left(\mathcal{L}^{\ast}, \mathcal{N}^{\ast}\right)$.

We first consider a simple case where there exist an index $a \in \{1,2,\dots, k\}$ and
an agent $i \in N_a$ such that $\mu(i) \in L_i$.
Then we define $N^{\ast} := N \setminus N_a$, $\mathcal{N}^{\ast} := \mathcal{N} \setminus \{N_a\}$, and $L_i^{\ast} := L_i$ for each agent $i\in N^{\ast}$.
Moreover, we set an initial envy-free matching as the restriction of $\mu$ to $N^{\ast}$.
Then the new instance satisfies the conditions~C1 and C2 since $|N^{\ast}|<|N|$ and $\ell\left(\mathcal{L}^{\ast}, \mathcal{N}^{\ast}\right) =\ell\left(\mathcal{L}, \mathcal{N}\right)$.

A similar argument can be applied to the case when there exist an index $a \in \{1,2,\dots, k\}$ and
an agent $i \in N_a$ such that $\mu(i) \notin L_i$ but
a matching $\mu'$ obtained from $\mu$ by exchanging $\mu (i)$ with 
some $x \in L_i$
is $(\mathcal{L}, \mathcal{N})$-envy-free.
In this case, we define $N^{\ast} := N \setminus N_a$, $\mathcal{N}^{\ast} := \mathcal{N} \setminus \{N_a\}$, and $L_i^{\ast} := L_i$ for each agent $i\in N^{\ast}$.
We set an initial envy-free matching as the restriction of $\mu'$ to $N^{\ast}$.
Then the new instance satisfies the conditions~C1 and C2 since $|N^{\ast}|<|N|$ and $\ell\left(\mathcal{L}^{\ast}, \mathcal{N}^{\ast}\right) =\ell\left(\mathcal{L}, \mathcal{N}\right)-1$.

Therefore, we may assume the following.

\smallskip
\noindent
($\ast$) For any agent $i$,  $\mu(i) \not\in L_i$ and any item $x\in L_i$ receives $(\mathcal{L}, \mathcal{N})$-envy from some agent on $\mu(i)$.
\smallskip
Since each item is acceptable to at most two agents, there exists exactly one agent that has $(\mathcal{L}, \mathcal{N})$-envy for an item $x\in L_i$.

We construct a new instance $(\mathcal{L}^{\ast}, \mathcal{N}^{\ast})$ as follows.
Let $N^{\ast} := N$.
We define the %
directed graph $D = (V,A)$ by 
\begin{align*}
\begin{split}
V & := \{1,2,\dots,k\}, \\
A & := \{(a,b) \in V \times V \mid a \neq b, 
\exists i \in N_a, \exists j \in N_b,\ M_i \cap L_j \neq \emptyset\}.  
\end{split}
\end{align*}
Roughly, $M_i \cap L_j \neq \emptyset$ means that $j$ cannot receive some item in $L_j$ because of $(\mathcal{L}, \mathcal{N})$-envy from $i$. 
We decompose $D$ into strongly connected components. 
Let $S \subseteq V$ be a source component of the decomposition (i.e., 
no arc in $A$ enters $S$). 
We define the partition $\mathcal{N}^{\ast}$ of $N^{\ast}$ by merging 
$\{N_a \mid a \in S \}$
into one part,
i.e.,
    $\mathcal{N}^{\ast} := \big(\mathcal{N} \setminus \{ N_a \mid a \in S \}\big) \cup \left\{N_S\right\}$,
where we define $N_S := \bigcup_{a \in S} N_a$.
Let
$X := \bigcup_{i \in N_S} L_i$. 
For each agent $i \in N_S$, we denote by $x_i^{\ast}$ be the item in 
$X \cap M_i$ that is minimum with respect to $\succ_i$. 
We define $\mathcal{L}^{\ast}=\{L^{\ast}_i\mid i\in N^{\ast}\}$ by 
\begin{equation*}
L_i^{\ast} := 
\begin{cases}
\{x \in M_i \mid x \succeq_i x_i^{\ast}\} & \text{ if } i\in N_S, \\
L_i &\text{ if } i \in N\setminus N_S. 
\end{cases}
\end{equation*}
We set $\mu$ as the initial matching in the resulting instance.
Note that $\mu$ is an $(\mathcal{L}^{\ast}, \mathcal{N}^{\ast})$-envy-free matching.

\cref{cl:<,cl:-|S|} below show that the resulting instance satisfies the conditions~C1 and~C2, respectively.
\begin{claim}\label{cl:<}
$\sum_{i \in N} |M_i \setminus L_i^{\ast}| < \sum_{i \in N} |M_i \setminus L_i|$.
\end{claim}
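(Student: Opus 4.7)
The plan is to prove $L_i \subseteq L_i^{\ast}$ for every $i \in N$, and then that this inclusion is strict for at least one $i \in N_S$. The containment is almost immediate: for $i \notin N_S$ we have $L_i^{\ast} = L_i$, while for $i \in N_S$, since $x_i \in L_i \subseteq X$ and $x_i \in M_i$, we have $x_i \in X \cap M_i$, so $x_i^{\ast} \preceq_i x_i$, which gives $L_i^{\ast} \supseteq L_i$. Strictness for some $i \in N_S$ then amounts to exhibiting $y \in X \cap M_i$ with $y \prec_i x_i$, equivalently $y \notin L_i$.

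I will first use the source-component property of $S$ together with $(\ast)$ to pin down where the acceptors of items in $X$ live. Fix $y \in X$; say $y \in L_j$ with $j$ in some $N_b \subseteq N_S$. By $(\ast)$, exchanging $\mu(j)$ with $y$ yields a matching that is not $(\mathcal{L}, \mathcal{N})$-envy-free, so there is an envier of $j$ for $y$, who must accept $y$; hence $y$ has a second acceptor $i$. Since $y \in M_i \cap L_j$, this yields an arc from the group of $i$ to $b$ in $D$ whenever the two groups differ. Because $S$ is a source (no arc enters $S$) and $b \in S$, the group of $i$ must itself lie in $S$, so $i \in N_S$.

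Now I split into two cases. If some $y \in X$ admits an acceptor $i \in N_S$ with $y \notin L_i$, then $y \in (X \cap M_i) \setminus L_i$ forces $x_i^{\ast} \prec_i x_i$ and therefore $L_i^{\ast} \supsetneq L_i$; we are done. Otherwise we are in the rigid case in which every $y \in X$ has both acceptors in $N_S$ and both of them have $y$ in their $L$-set. I rule this out using feasibility of the current instance, which is preserved along the recursion because $\sigma$ remains an $(\mathcal{L}, \mathcal{N})$-envy-free and satisfactory matching reachable from $\mu$. Consider any sequence $\mu = \mu^0 \leadsto \mu^1 \leadsto \dots \leadsto \mu^T$ to a satisfactory matching, and let $t^{\ast}$ be the first index at which some $i_0 \in N_S$ has $\mu^{t^{\ast}}(i_0) \in L_{i_0}$; such $t^{\ast}$ exists because $\mu^T$ is satisfactory on every source-part $N_a$ with $a \in S$. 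At step $t^{\ast}$ the agent $i_0$ moves to $y_0 \in L_{i_0} \subseteq X$, and in the rigid case $y_0$ admits another acceptor $j_0 \in N_S$ with $y_0 \in L_{j_0}$. By minimality of $t^{\ast}$, every $i' \in N_S$ still satisfies $\mu^{t^{\ast}-1}(i') \notin L_{i'}$; in particular $y_0 \succeq_{j_0} x_{j_0} \succ_{j_0} \mu^{t^{\ast}}(j_0)$, and when $i_0$ and $j_0$ lie in different parts of $\mathcal{N}$ the cross-group clause $\mu^{t^{\ast}-1}(i') \notin L_{i'}$ for all $i' \in N_{a_{j_0}}$ also holds since $N_{a_{j_0}} \subseteq N_S$. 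Thus $j_0$ has $(\mathcal{L}, \mathcal{N})$-envy for $i_0$ on $\mu^{t^{\ast}}$, contradicting the validity of the sequence.

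The main obstacle I expect is the rigid case: there is no purely structural reason (inside the source component) to exclude it, and one can construct small ``doubly-covered'' instances in which every item of $X$ lies in two $L$-sets while $(\ast)$ holds. Ruling it out requires invoking the reachability of a satisfactory matching, and the delicate point is choosing $t^{\ast}$ correctly so that both the strict preference $y_0 \succ_{j_0} \mu^{t^{\ast}}(j_0)$ and the across-group ``all outside $L$'' clause are forced at the same step.
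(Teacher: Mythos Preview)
Your proof is correct and follows essentially the same approach as the paper. The paper does not make your Case~1/Case~2 split explicit; it directly assumes $L_i^{\ast} = L_i$ for every $i \in N_S$ (equivalent to your rigid case), takes the first matching $\mu'$ in a sequence to a satisfactory matching at which some $i \in N_S$ lands in $L_i$, observes that the second acceptor $j$ of $\mu'(i)$ lies in $N_S$ by the source property and that $\mu'(i) \in L_j^{\ast} = L_j$, and concludes that $j$ has $(\mathcal{L},\mathcal{N})$-envy for $i$ on $\mu'$. Your preliminary localization of second acceptors of $X$ to $N_S$ and your explicit check of the across-group clause at step $t^{\ast}$ simply unpack steps the paper leaves implicit.
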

\begin{proof} 
By definition, $L_i^{\ast} \supseteq L_i$ holds for every agent $i \in N_S$.
Hence, it suffices to show $L_i^{\ast} \supsetneq L_i$ for some agent 
$i \in N_S$.

Suppose, to the contrary,
that $L_i^{\ast} = L_i$ for every agent $i \in N_S$.
Take an arbitrary sequence of $(\mathcal{L}, \mathcal{N})$-envy-free matchings from $\mu$ to some satisfactory matching.
Let $\mu'$ denote the first $(\mathcal{L}, \mathcal{N})$-envy-free matching in the sequence
with $\mu'(i) \in L_i$ for some agent $i \in N_S$.
Recall here that
$\mu'(i)$ belongs to the sets of acceptable items of two agents;
one is $i$
and the other is denoted by $j$.
Since $S$ is a source component of $D$,
the group $N_b$ having $j$ belongs to $N_S$.
Since $\mu'(i)\in X$, we see that $\mu'(i)\in L_j^{\ast}$.
Since  $L_j^{\ast} = L_j$ by assumption,
$\mu'(i)$ particularly belongs to $L_j$,
which implies that $j$ has $(\mathcal{L}, \mathcal{N})$-envy for $i$.
This is a contradiction to the $(\mathcal{L}, \mathcal{N})$-envy-freeness of $\mu'$.
\end{proof}

\begin{claim}\label{cl:-|S|}
$\ell\left(\mathcal{L}^{\ast}, \mathcal{N}^{\ast}\right) = \ell\left(\mathcal{L}, \mathcal{N}\right) - |S|$.
\end{claim}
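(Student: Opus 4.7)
The plan is to establish the two inequalities $\ell(\mathcal{L}^{\ast}, \mathcal{N}^{\ast}) \le \ell(\mathcal{L}, \mathcal{N}) - |S|$ and $\ell(\mathcal{L}^{\ast}, \mathcal{N}^{\ast}) \ge \ell(\mathcal{L}, \mathcal{N}) - |S|$ separately. The structural fact I would use throughout is a localization observation: because $S$ is a source component of $D$, because each item is acceptable to at most two agents, and because of assumption $(\ast)$, the unique $(\mathcal{L},\mathcal{N})$-envier of any target item $x \in L_j$ with $j \in N_S$ must itself lie in $N_S$. Hence every conflict over items in $X = \bigcup_{i \in N_S} L_i$ is confined to $N_S$, which is why merging the parts of $S$ into $N_S$ with the enlarged targets $L_i^{\ast}$ should save exactly $|S|$ moves.

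For the upper bound, I would take a shortest $(\mathcal{L},\mathcal{N})$-sequence $\mu = \mu_0 \leadsto \dots \leadsto \mu_{\ell}$ to some $(\mathcal{L},\mathcal{N})$-satisfactory $\mu_{\ell}$. For each $a \in S$, satisfactoriness supplies an agent $i_a \in N_a$ with $\mu_{\ell}(i_a) \in L_{i_a}$, and by the strict monotonicity of $\leadsto$ there is a unique first step $t_a$ at which $i_a$ enters $L_{i_a}$; the $|S|$ moves indexed by $\{t_a\}_{a \in S}$ involve pairwise distinct agents. I would then delete these $|S|$ moves and verify that the resulting sequence of length $\ell(\mathcal{L},\mathcal{N}) - |S|$ consists of $(\mathcal{L}^{\ast}, \mathcal{N}^{\ast})$-envy-free matchings and terminates at an $(\mathcal{L}^{\ast}, \mathcal{N}^{\ast})$-satisfactory one. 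For the reverse inequality, I would take a shortest $(\mathcal{L}^{\ast}, \mathcal{N}^{\ast})$-sequence $\mu = \nu_0 \leadsto \dots \leadsto \nu_{\ell^{\ast}}$ to some $(\mathcal{L}^{\ast}, \mathcal{N}^{\ast})$-satisfactory $\nu_{\ell^{\ast}}$. Satisfactoriness yields an agent $h \in N_S$ with $\nu_{\ell^{\ast}}(h) \succeq_h x_h^{\ast}$, and since $x_h^{\ast} \in L_j$ for some $j \in N_S$, this anchors a chain of items of $X$ strictly inside $N_S$; I would then use this anchor to append $|S|$ moves, one per part $a \in S$, placing a distinguished agent of each $N_a$ into $L_{i_a}$ while preserving $(\mathcal{L},\mathcal{N})$-envy-freeness.

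The hard part will be the envy-freeness bookkeeping, because the two envy notions differ in opposite ways on $N_S$: same-$N_S$ envy is strengthened by fusion, since the cross-part exemption disappears, whereas $N_S$-versus-outside envy is relaxed, since its exemption now only requires some agent in the larger $N_S$ to hold an item in the larger target $L_i^{\ast}$. For the deletion argument, I expect the crux to be showing that right after each retained move some agent in $N_S$ already holds an item in her $L_i^{\ast}$, so that the outside-envy exemption persists in the new instance; this is where the specific choice of the $t_a$ and the particular enlargement via $x_i^{\ast}$ should mesh. For the appending argument, the crux is to order the $|S|$ added moves so that the parts of $S$ are satisfied one by one; here I plan to exploit the strong connectivity of $S$ in $D$ to route moves through items of $X$ without creating any new envy outside $N_S$.
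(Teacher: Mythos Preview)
Your structural localization observation and the two-inequality strategy are correct, but both constructions have a genuine gap concerning \emph{where} the sequence is modified.

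For $\ell^{\ast}\le \ell-|S|$: deleting the $|S|$ moves at times $t_a$ does not control the within-$N_S$ envy you yourself flagged. Consider steps $t$ with $t_a\le t<t_b$: in the original sequence $i_a$ sits in $L_{i_a}$, which exempts cross-part envy out of $N_a$; hence another agent $i_b\in N_b$ may legitimately hold an item of $L_{i_a}$ that $i_a$ strictly prefers. After you delete the move at $t_a$, $i_a$ drops back below $L_{i_a}$ while $i_b$ keeps the offending item, and in the fused instance this is unexempted same-group envy. Your stated ``crux'' (some agent of $N_S$ is in $L_i^{\ast}$ after each retained move) only handles $N_S$-versus-outside envy, not this internal envy. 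The paper sidesteps the issue by a different construction: set $p_a$ to be the first time \emph{any} agent of $N_a$ reaches $L_i$, let $p_b=\min_{a\in S}p_a$, and \emph{freeze every agent of $N_S$ at the configuration $\mu_{p_b-1}$} for all later steps. Since at step $p_b-1$ no $N_S$-agent is in $L_i$, no cross-part exemption was in use there, so $\mu_{p_b-1}$ restricted to $N_S$ is genuinely envy-free and survives fusion. The missing idea is the co-owner observation: the unique other agent $j$ acceptable to $\mu_{p_b}(i_b)$ lies in $N_S$ (source property) and must already satisfy $\mu_{p_b-1}(j)\succ_j \mu_{p_b}(i_b)\in X$, hence $\mu_{p_b-1}(j)\in L_j^{\ast}$; this simultaneously provides the $N_S$-versus-outside exemption and the $(\mathcal{L}^{\ast},\mathcal{N}^{\ast})$-satisfactoriness.

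For $\ell^{\ast}\ge \ell-|S|$: appending the $|S|$ moves at the end fails because the prefix $\nu_0,\dots,\nu_{\ell^{\ast}}$ need not be $(\mathcal{L},\mathcal{N})$-envy-free. Once some $h\in N_S$ enters $L_h^{\ast}$ at step $p$, the $(\mathcal{L}^{\ast},\mathcal{N}^{\ast})$-exemption may be shielding envy from an agent $i\in N_a$ (with $a\in S$, $a\neq a_0$) toward an outside agent; the $(\mathcal{L},\mathcal{N})$-exemption for $i$ requires someone in $N_a$ itself to be in $L_{i'}$, which you have not arranged. The paper instead inserts the $|S|$ moves \emph{at step $p$}: for $q<p$ no $N_S$-agent is in $L_i^{\ast}$, so no exemption is in use and $\mu_0,\dots,\mu_p$ are already $(\mathcal{L},\mathcal{N})$-envy-free; the strong-connectivity walk through $S$ (which you correctly anticipated) is then performed starting from $\mu_p$, and $N_S$ is frozen at the resulting configuration for the remainder, so every $S$-part carries its $(\mathcal{L},\mathcal{N})$-exemption from that point on.
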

\begin{proof} %
Let $\ell := \ell\left(\mathcal{L}, \mathcal{N}\right)$ and 
$\ell^{\ast} := \ell\left(\mathcal{L}^{\ast}, \mathcal{N}^{\ast}\right)$.

We first show $\ell^{\ast} \leq \ell - |S|$.
Consider a shortest sequence 
$\mu =: \mu_0 \leadsto \mu_1 \leadsto \dots \leadsto \mu_\ell$ of $(\mathcal{L}, \mathcal{N})$-envy-free matchings from 
$\mu$ to some satisfactory $(\mathcal{L}, \mathcal{N})$-envy-free
matching $\mu_\ell$.
For each $a \in S$,
we denote by $p_a$ the minimum index
such that $\mu_{p_a}(i) \in L_i$ for some agent $i \in N_a$.
Let $b\in S$ be the index that satisfies $p_b = \min\{p_a \mid a \in S\}$, and $i_b\in N_b$ be the agent such that $\mu_{p_b}(i_b) \in L_{i_b}$.
Then, by assumption~{\rm ($\ast$)}, the item $\mu_{p_b}(i_b)$ is acceptable to another agent $j$ in some group $N_a$, meaning that either $a=b$ or $D$ has an arc $(a, b)$.
Since $S$ is a source component, we see $a\in S$.
By the definition of $p_b$, we observe that $\mu_{p_b -1}(j)\succ_j \mu_{p_b}(i_b)$, which implies that $\mu_{p_b -1}(j)\in L_j^{\ast} \setminus L_j$ as $\mu_{p_b}(i_b)\in X$.
Thus, in the $(p_b-1)$-st step, the agent $j\in N_S$ has an item in the new target set $L_j^{\ast}$.

We construct a sequence of 
$(\mathcal{L}^{\ast}, \mathcal{N}^{\ast})$-envy-free 
matchings as follows.
For $p$ with $p_b \leq p \leq \ell$,
define a matching $\mu_{p}'$ to be $\mu_{p}'(i)= \mu_{p_b-1}(i)$ if $i\in N_S$ and $\mu_{p}'(i)= \mu_{p}(i)$ otherwise.
Then 
the sequence $(\mu_0, \mu_1, \dots, \mu_{p_b-1}, \mu'_{p_b}, \dots, \mu'_\ell)$
forms that of $(\mathcal{L}^{\ast}, \mathcal{N}^{\ast})$-envy-free matchings
from $\mu$ to some satisfactory $(\mathcal{L}^{\ast}, \mathcal{N}^{\ast})$-envy-free matching.
Furthermore, since $\mu'_{p_a} = \mu'_{p_a-1}$ holds 
for all $a\in S$,  we can remove $\mu'_{p_a}$'s from the sequence.
This implies that there exists a sequence of $(\mathcal{L}^{\ast}, \mathcal{N}^{\ast})$-envy-free matchings whose length is $\ell - |S|$.
Thus $\ell^{\ast} \leq \ell - |S|$ holds.

We next show $\ell^{\ast} \geq \ell - |S|$.
Consider a shortest sequence $\mu =: \mu_0 \leadsto \mu_1 \leadsto \dots \leadsto \mu_{\ell^{\ast}}$ of $(\mathcal{L}^{\ast}, \mathcal{N}^{\ast})$-envy-free matchings from $\mu$ to some satisfactory $(\mathcal{L}^{\ast}, \mathcal{N}^{\ast})$-envy-free matching $\mu_{\ell^{\ast}}$.
Let $p$ be the minimum index such that $\mu_{p}(i_0) \in L_{i_0}^{\ast}$ for some $a_0 \in S$ and $i_0 \in N_{a_0}$.
We observe that $\mu_{p}(i_0)$ particularly belongs to $L_{i_0}^{\ast} \setminus L_{i_0}$.
In fact, suppose that $\mu_{p}(i_0)\in L_{i_0}$.
Then the item $\mu_{p}(i_0)$ is acceptable to another agent by assumption~{\rm ($\ast$)}.
Since $S$ is a source component, there exist an index $b\in S$ and $j\in N_b$ such that $\mu_{p}(i_0)\in M_j$.
Since $\mu_{p}(i_0)\in L^{\ast}_j$ and $\mu_{p-1}(j)\succ_j \mu_{p}(i_0)$, we have $\mu_{p-1}(j)\in L^{\ast}_j$.
This contradicts the definition of $p$.

Let $x_1 := x^{\ast}_{i_0}$.
Then there exist $a_1 \in S$ and $i_1 \in N_{a_1}$ such that $x_1 \in L_{i_1}$.
Moreover, since $\mu_{p}(i_0) \neq x_1$ as $\mu_{p}$ is $(\mathcal{L}^{\ast}, \mathcal{N}^{\ast})$-envy-free, it follows that $\mu_{p}(i_0) \succ_{i_0} x_1$.
Hence, the matching $\sigma_1$ obtained from $\mu_p$ by assigning $x_1$ to $i_1$
is $(\mathcal{L}, \mathcal{N})$-envy-free.
By this change, $N_{a_1}$ has the agent $i_1$ with an item in her target set, and hence any agent in $N_{a_1}$ has no $(\mathcal{L}, \mathcal{N})$-envy for agents in the other groups on $\sigma_1$.

Suppose that there exists some $a_2 \in S$ with $(a_1, a_2) \in A$.
Then there exist $i'\in N_{a_1}$ and $i_2\in N_{a_2}$ such that $M_{i'}\cap L_{i_2}$ contains an item, say $x_2$. 
Since the agent $i'$ has no $(\mathcal{L}, \mathcal{N})$-envy for agents in $N_{a_2}$ on $\sigma_1$, 
the matching $\sigma_2$ obtained from $\sigma_1$ by assigning $x_2$ to $i_2$
is $(\mathcal{L}, \mathcal{N})$-envy-free.
This change makes $N_{a_2}$ have the agent $i_2$ with an item in her target set.
We repeat the above procedure; 
for $j=2,3,\dots$, we find an index $a_j\in S \setminus \{a_1, \dots , a_{j-1} \}$ such that there exists an arc to $a_j$ from 
$\{a_1, \dots , a_{j-1} \}$, 
and obtain an $(\mathcal{L}, \mathcal{N})$-envy-free matching by exchanging an item for some agent in $N_{a_j}$. 
Since $D[S]$ forms a strongly connected component of $D$,
the repetition can be performed until, for all $a \in S$, some agent in $N_{a}$ has an item in her target set.
Thus we obtain a sequence $\mu_p \leadsto \sigma_1 \leadsto \sigma_2 \leadsto \dots \leadsto \sigma_{|S|}$ of $(\mathcal{L}, \mathcal{N})$-envy-free matchings
in which for each $a \in S$ there is an agent $i \in N_a$ with $\sigma_{|S|}(i) \in L_i$.

For $q$ with $p+1 \leq q \leq \ell^{\ast}$,
we define a matching $\mu_{q}'$ to be $\mu_{q}'(i)=\sigma_{|S|}(i)$ if $i \in N_S$ and $\mu_{q}'(i)=\mu_{q}(i)$ otherwise.
Then $\mu_0 \leadsto \dots \leadsto \mu_p \leadsto \sigma_1 \leadsto \dots \leadsto \sigma_{|S|} \leadsto \mu_{p+1}' \leadsto \dots \leadsto \mu_{\ell^{\ast}}'$
forms a sequence of $(\mathcal{L}, \mathcal{N})$-envy-free matchings
from $\mu$ to a satisfactory $(\mathcal{L}, \mathcal{N})$-envy-free matching $\mu_{\ell^{\ast}}'$.
This implies $\ell \leq \ell^{\ast} + |S|$.
\end{proof}

It follows from the above claims that we can recursively compute $\ell (\mathcal{L}, \mathcal{N})$ in polynomial time.
Since the above proofs are constructive,
we can find a shortest sequence as well in polynomial time.
This completes the proof. %
\end{proof}

\section{Coping with NP-Hardness}

To cope with NP-hardness of the shortest reformist sequence
problem,
we need to compromise either obtaining exact solutions or
computing in polynomial time.

The compromise of exact solutions leads us to the possibility of
approximation algorithms.
A polynomial-time algorithm for the shortest reformist sequence problem
\emph{approximates within a factor of $\alpha\geq 1$} if
for all instances, 
the length of the reformist sequence that is obtained as the output
of the algorithm is at most as long as $\alpha$ times the shortest 
length of a reformist sequence.
The smaller value of $\alpha$ means a better approximation guarantee.

It turns out that even an approximation is hard to obtain.
The following theorem gives a precise statement of the sentence above.

\begin{theorem}
\label{thm:inapx}
The shortest reformist sequence problem is inapproximable 
in polynomial time 
within a factor of $c \ln n$ for some constant $c>0$, unless \textup{P} $=$ \textup{NP}.
\end{theorem}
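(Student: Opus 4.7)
The plan is to establish \cref{thm:inapx} by a gap-preserving reduction from \textsc{Minimum Set Cover}, which by the celebrated result of Dinur and Steurer (building on Feige and Raz--Safra) is \textup{NP}-hard to approximate within $(1-\varepsilon)\ln |U|$ for every $\varepsilon > 0$, where $U$ is the universe of elements. The aim is to show that any polynomial-time $c\ln n$-approximation for the shortest reformist sequence problem would yield a sub-logarithmic approximation for Set Cover.

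Given an instance $(U,\mathcal{F})$ of Set Cover, I would construct an instance of the shortest reformist sequence problem built from two kinds of gadgets. For each set $F\in\mathcal{F}$, a \emph{set gadget} would be built in the spirit of the vertex gadget in \cref{thm:NPc}: it contains a bounded number of agents and items, and has the property that ``selecting'' $F$ corresponds to routing some agent through an intermediate item, costing exactly one extra step in the reformist sequence, while ``not selecting'' $F$ costs nothing above the baseline. For each element $u\in U$, an \emph{element gadget} would be designed so that it can be completed in a constant number of steps if some set gadget $F\ni u$ is selected (releasing a key intermediate item analogous to $t_v$ in \cref{thm:NPc}), and otherwise forces a detour of many steps. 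Crucially, I would amplify the detour by nesting it inside the cascading structure of \cref{thm:exponential-example} with parameter $p=\Theta(\ln|U|)$, so that an uncovered element contributes $\Omega(\ln|U|)$ extra steps.

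To make the reduction gap-preserving, I would ensure that the total number $n$ of agents in the constructed instance is polynomial in $|U|+|\mathcal{F}|$ (so $\ln n = \Theta(\ln|U|)$), and that the length of a shortest reformist sequence equals (up to a bounded additive term absorbable in the yes/no gap) the size of the minimum set cover times the element-gadget amplification factor. Concretely, if the minimum set cover has size $k^\ast$, the shortest reformist sequence would have length $\Theta(k^\ast+|U|)$ in the yes case, whereas the no case, where every cover has size $\Omega((1-\varepsilon)k^\ast\ln|U|)$, forces the sequence length to $\Omega(k^\ast\ln|U|)$; combining this with an appropriate padding by duplicated trivial gadgets (if needed to align $k^\ast$ with $n$) gives the desired $c\ln n$ inapproximability.

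The main obstacle will be the gadget design and its analysis. The gadgets must simultaneously (i) be envy-free so that the initial matching $\mu$ and the reformist envy-free matching $\sigma$ are well-defined and satisfy the preprocessing assumptions of Section~5.1, (ii) interact only through the designated ``release'' items so that the only choices made by the reformist sequence correspond to choosing a set cover, and (iii) realize the cascading blow-up of \cref{thm:exponential-example} locally inside each element gadget without creating unintended short-cuts or unintended envy between gadgets. Getting the bookkeeping right so that the additive overhead (forced by the lower bound $\ell\ge n$ from Section~5.1) does not swamp the multiplicative $\ln|U|$ gap is the technical core.
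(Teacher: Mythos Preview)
Your reduction from \textsc{Set Cover} is the right starting point and the cascade from \cref{thm:exponential-example} is the right amplification tool, but you attach the cascade to the wrong side of the bipartition, and this kills the gap. In your sketch, selecting a set costs one extra step while leaving an element uncovered costs $\Theta(\ln|U|)$ extra steps. Since every element can be covered by selecting one containing set at cost $1 < \Theta(\ln|U|)$, an optimal reformist sequence will simply cover every element; the element cascades never fire, and the sequence length becomes (baseline) $+\,k$ where $k$ is the size of the chosen cover. But the baseline is at least $n$, and $n$ already counts at least one agent per set gadget and per element gadget, so $n \ge |\mathcal{F}| \ge k$ for \emph{every} cover. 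Hence the ratio between the no-case length $n + k_{\mathrm{no}}$ and the yes-case length $n + k_{\mathrm{yes}}$ is at most $2$, irrespective of the Dinur--Steurer gap $k_{\mathrm{no}}/k_{\mathrm{yes}} = \Theta(\ln|U|)$. Padding with trivial gadgets only inflates $n$ and drives the ratio toward $1$, not toward $\ln n$; your claimed no-case length $\Omega(k^{\ast}\ln|U|)$ would require $k_{\mathrm{no}} > |\mathcal{F}|$, which is impossible.

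What the paper does---and what your plan is missing---is to make the cascade amplify the cost of \emph{selecting} a set, not the cost of leaving an element uncovered, and to take the cascade parameter $p$ polynomially large rather than $\Theta(\ln|U|)$. Each set $S_j$ carries a pair of cascade agents $y_j,y'_j$: if $S_j$ is selected (its head agent $x_{j,0}$ routes through the intermediate item $u_j$) then $y_j,y'_j$ are forced onto the long $(2p-2)$-step route, whereas if $S_j$ is not selected they finish in three steps via $u_j$. Element gadgets have no amplification; they are wired only to force that \emph{some} cover is chosen. The shortest sequence then has length exactly $(2p-4)k + C$ with $C$ independent of $k$, and taking $p$ large enough that $2p-4 \ge C$ makes the additive term negligible: an $\alpha$-approximation for the sequence length yields a $2\alpha$-approximation for the minimum cover $k$, and with $\alpha = c\ln n$ and $\ln n = \Theta(\ln|U|)$ this contradicts the Set Cover lower bound. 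In short, the amplification must \emph{multiply} the cover size, not merely penalise incomplete coverage.
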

\begin{proof}
    We reduce the set cover problem to the shortest reformist sequence problem.
    In the set cover problem, we are given a family of subsets $\mathcal{S}=\{S_1, \dots, S_h\}$ on the ground set $V$ where we assume that $h=O({|V|}^\beta)$.
    The goal is to find a subfamily $\mathcal{S}'\subseteq \mathcal{S}$ such that $\bigcup_{S\in\mathcal{S}'}S=V$ and $|\mathcal{S}'|$ is minimized.
    It is known~\cite{DS14} that the set cover problem is inapproximable within a factor of $(1-\varepsilon)\ln n$ for every $\varepsilon >0$ unless P$=$NP.
    We denote by $\delta (v)$ the set of subsets in $\mathcal{S}$ that contain an element $v\in V$, i.e., $\delta (v) = \{S\in \mathcal{S} \mid v\in S\}$.
    Also, we define $T := \sum_{S\in\mathcal{S}}|S|$.

    We construct an instance of the shortest reformist sequence problem as follows.
    Let $p$ be a positive integer, which will be specified later.
    For each subset $S_j\in \mathcal{S}$ where $d_j=|S_j|$, we define $d_j+3$ agents $x_{j,0}, x_{j,1},\dots, x_{j, d_j}$, $y_j$, and $y'_j$, and, for each element $v\in V$, we define $f_v$ agents 
    $v^1,v^2,\dots,v^{f_v}$, where $f_v=|\delta (v)|$.
    Moreover, we introduce one more agent $z$.
    Thus 
    \begin{equation*}
    \begin{split}
    N & = 
    \{ x_{j,\ell} \mid j \in \{1,\dots, h\}, \ell\in \{0,1,2,\dots, d_j\}\}
    \cup \{ y_j, y'_j \mid j \in \{1,\dots, h\}\} \\
    & \ \ \ \ 
    \cup \{ v^\ell \mid v\in V, \ell \in \{ 1,2,\dots, f_v\} \}
    \cup\{z\}.
    \end{split}
    \end{equation*}
    We see that $n=|N|=O(h|V|)$.

    Define 
    \begin{equation*}
    M'=\{r_i, s_i\mid i\in N\}\cup \{t_{j, v}\mid S_j\in \mathcal{S}, v\in S_j\}
    \cup \{u_j\mid S_j\in\mathcal{S}\}.
    \end{equation*}
    For each subset $S_j\in \mathcal{S}$, we prepare the set of items 
    \begin{equation*}
    M_j = \{a_{j, \ell}, b_{j, \ell}\mid S_j\in\mathcal{S}, \ell \in \{1,2,\dots, p\} \}.
    \end{equation*}
    Our instance has the set of items $M=M'\cup \bigcup_{S_j\in\mathcal{S}} M_j$.

    For each subset $S_j\in \mathcal{S}$, 
    letting $S_j=\{v_1, \dots, v_{d_j}\}$ (in an arbitrary order), 
    the agents $x_{j, \ell}$'s have the following preferences:
     \begin{align*}
       \succ_{x_{j, 0}} &\colon r_{x_{j, 0}}, u_{j}, r_{x_{j,d_j}}, r_{x_{j,d_j-1}}, \dots, r_{x_{j, 1}}, s_{x_{j, 0}},\\
       \succ_{x_{j, \ell}} &\colon r_{x_{j, \ell}}, t_{j, v_\ell}, s_{x_{j, \ell}}\quad (\ell \in \{1,2,\dots, d_j\}).
     \end{align*}
    Moreover, the agents $y_j$ and $y'_j$ have the preferences in a way similar to Theorem~\ref{thm:exponential-example}, that is,
    \begin{align*}
        \succ_{y_j} &\colon a_{j, p}, b_{j, p}, a_{j, p-1}, b_{j, p-1}, \dots, a_{j, 2}, b_{j, 2}, a_{j, 1},\\
        \succ_{y'_j} &\colon b_{j, p}, u_{j}, b_{j, p-1}, a_{j, p}, b_{j, p-2}, a_{j, p-1}, \dots, b_{j, 2}, a_{j, 3}, b_{j, 1}.
    \end{align*}
    For each vertex $v\in V$, we may re-index subsets in 
    $\delta(v)$ so that $\delta (v)=\{S_1, \dots, S_{f_v}\}$.
    Then the preferences of the associated agents are defined to be
    \begin{align*}
    \succ_{v^\ell} &\colon r_{v^\ell}, t_{\ell, v}, r_{v^{\ell+1}}, s_{v^\ell} \quad (\ell \in \{1,2,\dots, f_v\}),
    \end{align*}
    where we assume that $v^{\ell+1} = v^1$.
    Finally, the last agent $z$ has the preference defined by
    \begin{align*}
    \succ_{z} &\colon r_{z}, r_{x_{1, 0}}, \dots, r_{x_{h, 0}}, s_{z}.
    \end{align*}
    The initial matching $\mu$ is set $\mu(i)=s_i$ for each agent $i$.
    By Claim \ref{clm:inapprox1} below, a reformist envy-free matching $\sigma$ with respect to $\mu$ is $\sigma(i)=r_{i}$ for each agent $i$.

    \begin{claim}\label{clm:inapprox1}
    Let $k$ be a non-negative integer.
    If $\mathcal{S}$ has a set cover of size $k$, then there exists a reformist sequence of length at most $(2p-4)k + 2 T  + 4 h + |V| + 1$.
    \end{claim}
    \begin{proof}
        Let $\mathcal{S}^\ast$ be a set cover of size $k$.
        Consider the following reformist sequence.
        \begin{enumerate}
        \item For each $S_j\in \mathcal{S}^\ast$ where $S_j=\{v_1, \dots, v_{d_j}\}$, we do the following:
        The agents $y_j$ and $y'_j$ exchange items repeatedly to obtain $a_{j, p}$ and $b_{j, p}$, respectively, which takes $2p-2$ steps~(See the proof of Theorem~\ref{thm:exponential-example}).
        Then since an envy at $u_{j}$ from $y'_j$ is removed, the agent $x_{j, 0}$ exchanges $s_{x_{j, 0}}$ with $u_{j}$.
        For each integer 
        $\ell \in \{1,2,\dots, d_j\}$, $x_{j, \ell}$ exchanges $s_{x_{j, \ell}}$ with $r_{x_{j,\ell}}$.
        This step takes $2p-2 + d_j +1$ exchanges for each $S_j\in \mathcal{S}^\ast$.
        \item For each element $v\in V$, the associated agents are nominated one by one as follows.
        Since $\mathcal{S}^\ast$ is a set cover, there exists a subset $S\in \mathcal{S}^\ast$ with $v\in S$.
        So we may re-index subsets in $\delta (v)$ so that $\delta (v)=\{S_1, \dots, S_{f_v}\}$ with $S_1\in \mathcal{S}^\ast$.
        The agent $v^1$ exchanges $s_{v^1}$ with $t_{1, v}$.
        This can be done since $t_{1, v}$ receives no envy from $x_{1, \ell}$'s due to Step 1.
        Then the agent $v^\ell$ exchanges $s_{v^\ell}$ with $r_{v^\ell}$ in the order of $\ell=2,3,\dots, f_v$.
        Finally, the agent $v^1$ exchanges $t_{1, v}$ with $r_{v^1}$.
        In total, this step takes $f_v+1$ exchanges for each element $v\in V$.
        \item The agent $z$ exchanges $s_z$ with $r_z$.
        \item For each subset $S_j\in \mathcal{S}^\ast$,
        the agent $x_{j, 0}$ exchanges $u_{j}$ with $r_{x_{j, 0}}$.
        Furthermore, for each subset $S_j\in \mathcal{S}\setminus \mathcal{S}^\ast$,
        the agent $x_{j, \ell}$ exchanges $s_{x_{j, \ell}}$ with $r_{x_{j, \ell}}$ in the order of $\ell=0, 1,\dots, d_j$.
        This can be done since $r_{x_{j, 0}}$ receives no envy from the other agents.
        \item For each subset 
        $S_j\in \mathcal{S}\setminus \mathcal{S}^\ast$,
        $y'_j$ exchanges $s_{y'_j}$ with $u_{j}$, 
        $y_j$ exchanges $s_{y_j}$ with $r_{y_j}$,
        and $y'_j$ exchanges $u_{j}$ with $r_{y'_j}$.
        \end{enumerate}
        In the above reformist sequence, the total number of steps is
        \begin{align*}
         & (2p-2)k + \sum_{S_j\in \mathcal{S}^\ast}(d_j+1) + \sum_{v\in V}(f_v+1)+ 1
          + |\mathcal{S}^\ast| 
           + \sum_{S_j\in \mathcal{S}\setminus \mathcal{S}^\ast}(d_j+1)
          + 3 |\mathcal{S}\setminus \mathcal{S}^\ast|\\
         & = (2p-1)k  + (T + h) 
          + \left(\sum_{v\in V}f_v+|V|\right)+ 1+3(h-k)\\  
         & = (2p-4)k + 2 T  + 4 h + |V| + 1. 
        \end{align*}
        where the last equality holds since $\sum_{v\in V}f_v=T$.
    \end{proof}

    \begin{claim}\label{clm:inapprox2}
    Let $k$ be a non-negative integer.
    If there exists a reformist sequence of length $(2p-4)k + 2 T  + 4 h + |V| + 1$, then $\mathcal{S}$ has a set cover of size $k$.
    \end{claim}
    \begin{proof}
        Consider a reformist sequence of length at most $(2p-4)k + 2 T  + 4 h + |V| + 1$.
        We may assume that it has no redundant steps.
        Define $\mathcal{S}^\prime$ as the family of $S_j\in\mathcal{S}$
        such that $x_{j, 0}$ possesses $u_j$ at some point in the reformist sequence.
        Then, for $S_j\in\mathcal{S}^\prime$, $x_{j, 0}$ takes $2$ steps, and $x_{j, \ell}$ takes one step for each $\ell =1,2,\dots, d_j$. 
        Moreover, for such $j$, $y_j$ and $y'_j$ have to take $2p-2$ steps in total to remove envy at $u_j$ before the agent $x_{j, 0}$ moves.
        For $S_j\in\mathcal{S}\setminus \mathcal{S}^\prime$, the agents $x_{j, \ell}$ takes one step for each $\ell =0, 1,2,\dots, d_j$, and $y_j$ and $y'_j$ take at least $3$ steps.
        For each element $v\in V$, the agents $v^\ell$'s take at least $f_v+1$ steps in total.
        Therefore, the total number of steps in the reformist sequence is at least
        \begin{align*}
         & (2p-2)|\mathcal{S}^\prime| + \sum_{S_j\in \mathcal{S}^\prime}(d_j+2) + \sum_{S_j\in \mathcal{S}\setminus \mathcal{S}^\prime}(d_j+1) 
          + \sum_{v\in V}(f_v+1) + 3 |\mathcal{S}\setminus \mathcal{S}^\prime| + 1  \\
         & = (2p-4)|\mathcal{S}'| + 2 T  + 4 h + |V| + 1. 
        \end{align*}
        Hence $|\mathcal{S}^\prime|\leq k$ holds by the assumption.

        Moreover, we see that $\mathcal{S}^\prime$ is a set cover.
        Suppose not.
        Then there exists an element $v\in V$ such that $\delta (v)\cap \mathcal{S}^\prime = \emptyset$.
        This means that no subset $S_j \in \delta (v)$ possesses  $u_j$ at any point in the reformist sequence.
        So all the agents $x_{j, 0}$ for $S_j \in \delta (v)$ exchange $s_{j, 0}$ with $r_{j, 0}$.
        Since $r_{j, 0}$ receives an envy from the agent $z$, the agent $z$ has to exchange before that.
        However, to exchange items of the agent $z$, we have to exchange items of the agents $v^\ell$'s, but it is impossible before agent $x_{j, 0}$ for some $S_j \in \delta (v)$ exchanges.
        This is a contradiction.
        Thus $\mathcal{S}^\prime$ is a set cover of size $k$.
    \end{proof}

    Let $\mathrm{OPT}$ be the optimal value for the instance of the shortest reformist sequence problem we construct as above.
    Since 
    we may assume that ${\cal S}$ has a set cover of size at least $1$,
    it follows from Claim~\ref{clm:inapprox2} that 
    \begin{equation*}
    \mathrm{OPT}\geq 2p-4 + 2T + 4h + |V|+1.
    \end{equation*}
    Suppose that we can find in polynomial time a reformist sequence of length at most $\alpha \mathrm{OPT}$ steps for some $\alpha \geq 1$.
    By Claim~\ref{clm:inapprox2}, we can construct a set cover of size $k$ where 
    \[
    k \leq \frac{1}{2p-4} \left(\alpha \mathrm{OPT} -2T - 4 h - |V| - 1 \right). 
    \]
    On the other hand, Claim~\ref{clm:inapprox1} implies that an optimal set cover has size at least
    \[
    \frac{1}{2p-4} \left(\mathrm{OPT} -2T - 4 h - |V| - 1 \right). 
    \]
    Hence the approximation ratio for the set cover problem is at most
    \begin{equation*}
    \frac{\alpha \mathrm{OPT} -2T - 4 h - |V| - 1}{\mathrm{OPT} -2T - 4 h - |V| - 1} 
    \leq \alpha \frac{ 2p-4 + 2T + 4h + |V|+1}{2p -4}.
    \end{equation*}
    since the maximum is attained when $\mathrm{OPT}$ is minimum, that is, 
    \begin{equation*}
    \mathrm{OPT} = 2p-4 + 2T + 4h + |V|+1.
    \end{equation*}
    Therefore, if $p$ is sufficiently large, i.e., $2p-4\geq 2T + 4h + |V|+1$, then the approximation ratio is at most $2\alpha$.
    
    We now suppose that $\alpha = c \ln n$ for some sufficiently small constant $c$.
    Since $n=O(h|V|)$ and $h\leq O({|V|}^\beta)$ for some constant $\beta$, the above discussion implies that the set cover problem admits $2c(\beta+1) \ln |V|$-approximation.
    However, this contradicts that the set cover problem is inapproximable within a factor of $(1-\varepsilon)\ln |V|$ for every $\varepsilon >0$.
    This completes the proof.
\end{proof}

On the other hand, the compromise of polynomial-time computability
leads us to fixed-parameter tractability.
In fixed-parameter tractability, we extract a certain
value $k$ from the instance as a \emph{parameter}, and
allow the running time of the form $O(f(k)p(m,n))$, where
$f$ is an arbitrary (but usually computable) function and
$p$ is a polynomial.
An algorithm with such a running time is called a \emph{fixed-parameter algorithm}, and the problem with a
fixed-parameter algorithm is called \emph{fixed-parameter tractable}.

For the shortest reformist sequence problem, we have several
choices of natural parameters.
First, we study the shortest length $\ell$ of a reformist sequence as a parameter.
With this choice, the problem is fixed-parameter tractable.
\begin{theorem}
\label{thm:fpt2}
The decision version of the 
shortest reformist sequence 
problem parameterized by the length $\ell$
of a sequence
is fixed-parameter tractable.
\end{theorem}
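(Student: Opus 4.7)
The plan is a bounded-search-tree algorithm parameterized by $\ell$. Apply the preprocessing of Section~\ref{sec:preprocessing} first; after this, every agent $i \in N$ satisfies $\sigma(i) \succ_i \mu(i)$, and hence every reformist sequence has length at least $n$. Thus if $\ell < n$ we may immediately answer \textbf{No}, so from now on we may assume $n \le \ell$.

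The key observation is that a reformist sequence is uniquely determined by the sequence of nominated agents: once the nominator $i$ is fixed, the item exchanged in is forced to be the (unique) most preferred item in $M_i \cap \overline{M}_{\mu_t}$ that keeps envy-freeness (if any). Therefore I enumerate reformist sequences by a depth-first search whose state is the current envy-free matching $\mu_t$ and whose branches correspond to choices of nominated agents. At each node, the algorithm (i) returns \textbf{Yes} if $\mu_t$ equals the reformist envy-free matching $\sigma$ (computable in polynomial time and unique by \cref{thm:unique}); (ii) prunes if the current depth equals $\ell$; and (iii) otherwise, for each agent $i$ whose nomination currently yields a valid exchange, branches on $i$, performs the forced exchange, and recurses.

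The branching factor at each node is at most the number of agents that can be meaningfully nominated, which is at most $n \le \ell$, and the depth is at most $\ell$. Hence the search tree has at most $\ell^\ell = 2^{O(\ell \log \ell)}$ nodes, and the per-node work (finding the forced exchange, checking envy-freeness, and testing equality with $\sigma$) is polynomial in $n$ and $m$. Since every reformist sequence of length at most $\ell$ arises as some root-to-leaf path in this tree and must terminate at $\sigma$, the algorithm is correct and runs in time $2^{O(\ell \log \ell)} \cdot \mathrm{poly}(n,m)$, which is the desired fixed-parameter bound.

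The main subtlety will be justifying that the search tree is complete and that the bound $n \le \ell$ truly holds after preprocessing; the former relies on the uniqueness of the forced exchange under a nomination, while the latter uses that each agent must strictly improve at least once. Everything else is routine implementation. Note that no kernelization in $\ell$ is needed: the $\mathrm{poly}(n,m)$ factor outside the search tree is compatible with fixed-parameter tractability.
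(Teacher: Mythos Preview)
Your proposal is correct and follows essentially the same approach as the paper's own proof: preprocess so that each agent must strictly improve (hence $n\le \ell$ or the answer is \textbf{No}), then run a bounded search tree by branching on the nominated agent at each step, giving at most $n^{\ell}\le \ell^{\ell}$ nodes with polynomial work per node. The only cosmetic differences are that you spell out the termination test against $\sigma$ and articulate the ``forced best exchange'' observation explicitly, whereas the paper simply invokes the nomination model from the introduction.
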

\begin{proof}
First, note that after preprocessing in \cref{sec:preprocessing}, if the number $n$ of agents is larger than $\ell$, then there exists
no reformist sequence of length at most $\ell$
since no agent shares an item in the initial envy-free matching and the reformist envy-free matching, and thus the length of every reformist sequence must be at least $n$.
Therefore, after the preprocessing, if $n > \ell$, then
the output is No.

Now, we may assume that $n \leq \ell$.
Then, we consider nominating an arbitrary agent and
exchanging her current item with the best item for her on the table while keeping the envy-freeness.
We iterate nomination at most $\ell$ times.
Then, we obtain a branching algorithm with branching factor $n$ and height $\ell$.
Therefore, the size of the recursion tree is at most
$n^{\ell} \leq \ell^{\ell}$.
Since each exchange can be performed in polynomial time, the whole algorithm runs in $O(\ell^{\ell}p(n,m))$ for some polynomial $p$.
\end{proof}

As the second choice, we study the shortest length 
$\ell$ of a reformist sequence
\emph{minus} the number $n$ of agents as a parameter.
Since the shortest length is at least $n$ (see Section~\ref{sec:preprocessing}),
this parameter can be seen as the number of extra steps needed
to obtain the reformist envy-free matching.

The next theorem shows that it is unlikely to obtain a
fixed-parameter algorithm with this parameter.
Here, W[1]-hardness is a counterpart of NP-hardness in 
fixed-parameter tractability.

\begin{theorem}
\label{thm:w1hard}
It is {\rm W[1]}-hard to decide whether there exists a reformist 
sequence of length at most $n+k$ when $k$ is a parameter.
\end{theorem}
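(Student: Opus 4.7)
The plan is to give a parameterized reduction from \emph{Multicolored Clique}, which is W[1]-hard parameterized by the clique size $k$. Given an instance $(G, V_1, \ldots, V_k)$ with vertex partition $V(G) = V_1 \sqcup \cdots \sqcup V_k$, the task is to decide whether $G$ contains a clique using exactly one vertex from each $V_i$. I would build an instance of the shortest reformist sequence problem whose minimum length equals $n + f(k)$ for some $f(k) = O(k^2)$ (a function of the parameter alone) if and only if such a multicolored clique exists.

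The construction uses two kinds of gadgets, arranged so that the length-$n$ baseline is always achievable and any excess corresponds to ``detours'' through intermediate items, in the spirit of the $t_v$ and $y_{e,\cdot}$ items used in the proof of \cref{thm:NPc}. For each color class $V_i$, a \emph{selection gadget} consists of agents whose reformist moves can be scheduled linearly (contributing only to the $n$ baseline) except for one mandatory detour through an intermediate item associated to a single vertex $v_i \in V_i$; thus it costs exactly one extra step and commits the sequence to a specific $v_i$. For each pair $i < j$, an \emph{edge-verification gadget} shares items with the selection gadgets of $V_i$ and $V_j$, and is designed so that its internal envy relations can be resolved with only one extra step precisely when the committed vertices $v_i, v_j$ are adjacent in $G$; otherwise strictly more detours are required.

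Summing over gadgets, the shortest reformist sequence has length $n + k + \binom{k}{2}$ if and only if the $k$ committed vertices pairwise induce edges of $G$. Hence the constructed instance admits a reformist sequence of length at most $n + \big(k + \binom{k}{2}\big)$ iff $G$ has a multicolored clique of size $k$. Since $k + \binom{k}{2}$ depends only on the parameter $k$, this is a parameterized reduction, yielding W[1]-hardness of the original problem.

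The main obstacle is the joint design of the two gadget families so that the envy relations in each edge-verification gadget faithfully ``read off'' the committed choices of the selection gadgets. Two pitfalls must be ruled out: (i) a selection gadget must not be able to ``cheat'' by temporarily committing to one vertex long enough to satisfy an incident edge gadget and then switching; and (ii) the combined preference structure must admit a reformist envy-free matching at all, so that the length-$n$ baseline is actually attainable irrespective of the guessed clique. I would handle (i) by arranging that each detour item in a selection gadget, once used, irreversibly removes exactly one envy edge in each incident edge-verification gadget, so that the verification cost is a deterministic function of the committed vertex; and (ii) by layering the gadgets so that the overall dependency digraph (in the sense of the auxiliary graph from the proof of \cref{thm:atmost3items}) is acyclic. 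Verifying that this layered, selection-plus-verification architecture can be realized with preference lists satisfying all the local envy-freeness constraints is the technical heart of the proof.
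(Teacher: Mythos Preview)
Your high-level plan matches the paper's proof: a parameterized reduction from Multicolored Clique with target bound $n + k + \binom{k}{2}$, split as $k$ extra steps for vertex selection and $\binom{k}{2}$ extra steps for pairwise verification.

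The gap is that you have written a roadmap, not a proof: you explicitly defer the gadget construction as ``the technical heart,'' and without it nothing can be checked. Two places where your sketch is vaguer than (and architecturally different from) the paper's implementation deserve attention. First, the paper does not build one selection gadget per color class; it builds one detour option per \emph{vertex}. Each vertex $v$ gets an agent $v^0$ with an optional intermediate item $z_v$; taking that detour unlocks agents $v^1,\ldots,v^{d_v}$ (one per incident edge), which in turn remove envy from shared items $y_e$ for $e\in\delta(v)$. The budget of $k$ extra steps then bounds the number of opened vertices by $k$. Your phrasing ``one mandatory detour per color class committing to a single $v_i\in V_i$'' describes a different gadget that must internally force a unique choice from $V_i$ at cost exactly one, regardless of $|V_i|$; you have not shown how to build such a thing, and it is not obviously easier. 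Second, the mechanism that enforces your condition~(i)---that a selection cannot be faked and then withdrawn---is, in the paper, a single global agent $a$: every edge agent envies $r_a$, and $a$ in turn envies every $r_{v^0}$. This forces all edge gadgets to complete before any unopened $v^0$ can reach its target directly, so the detours through $z_v$ are genuinely unavoidable. Your sketch names the pitfall but supplies no analogous device. The strategy is correct, but the concrete instance still has to be written down and the matching upper and lower bounds on the shortest length actually proved.
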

\begin{proof}
    In order to prove the theorem, 
    we reduce the multi-colored clique problem to our problem.
    The multi-colored clique problem is to ask whether, 
    given a $k$-partite graph $G=(V, E)$ with a
    partition $V_1, V_2,\dots, V_k$ of $V$, 
    there exist $k$ vertices $v_1, v_2, \dots, v_k$ such that 
    $v_i\in V_i$ for every integer $i \in \{1,2,\dots,k\}$ 
    and $v_1, v_2, \dots, v_k$ forms a clique.
    It is known~\cite{FHRV09,P03} that the multi-colored clique 
    problem is W[1]-hard when $k$ is a parameter.
    
    Let $G=(V, E)$ with a 
    partition $V_1, V_2,\dots, V_k$ of $V$ 
    be an instance of the multi-colored clique problem.
    We denote $V :=\{v_1, \dots, v_{n^{\prime}}\}$, where $n^{\prime}:=|V|$.
    
    We construct an instance of our problem as follows.
    For each vertex $v_j\in V$, we introduce $d_j+1$ agents
    \begin{equation*}
    v_j^0, v_j^1, \dots, v_j^{d_j},
    \end{equation*}
    where $d_j:=|\delta (v_j)|$ is the degree of $v_j$.
    For each edge $e\in E$, we prepare one agent $e$.
    Moreover, we define one more agent $a$.
    Thus, the set $N$ of agents is 
    \begin{equation*}
    N=\{v_j^\ell\mid v_j\in V, \ \ell \in\{0,1,\dots, d_j\}\}
    \cup E \cup \{a\}.
    \end{equation*}
    The size $|N|$ is equal to 
    \begin{equation*}
    \sum_{v_j\in V} (d_j+1) + |E| + 1 = 
    |V| + 3|E| + 1,
    \end{equation*}
    since $\sum_{v_j\in V} d_j = 2|E|$.
    Define the set $M$ of items to be 
    \begin{equation*}
    M=\{r_i, s_i\mid i\in N\}\cup 
    \{y_e\mid e\in E\}\cup\{z_{v_j}\mid v_j\in V\}.
    \end{equation*}
    The preferences of the agents are defined as follows.
    For each vertex $v_j\in V$, letting 
    $\delta (v_j)=\{e_1, e_2, \dots, e_\ell\}$~(in an arbitrary order), 
    \begin{align*}
        \succ_{v_j^\ell} &\colon r_{v_j^\ell}, y_{e_\ell}, s_{v_j^\ell} \quad (\ell\in \{1,2,\dots, d_j\}),\\
        \succ_{v_j^0} &\colon r_{v_j^0}, z_{v_j}, r_{v_j^d}, r_{v_j^{d-1}}, \dots, r_{v_j^1}, s_{v_j^0}.
    \end{align*}
    For each pair of integers
    $i, j\in \{1,2,\dots, k\}$ such that $i \ne j$,  
    we denote by $E[V_i, V_j]$ the set of 
    edges connecting vertices of $V_i$ and $V_j$.
    Note that every edge belongs 
    to exactly one of the $E[V_i, V_j]$.
    For each pair of integers 
    $i, j\in \{1,2,\dots, k\}$ such that $i\neq j$, 
    denoting 
    $E[V_i, V_j] = \{e_1, \dots, e_t\}$~(in an arbitrary order), 
    \begin{align*}
        \succ_{e_\ell} &\colon r_{e_\ell}, y_{e_\ell}, r_{e_{\ell+1}}, r_a, s_{e_\ell}, \quad (\ell \in \{1,2,\dots, t\}),
    \end{align*}
    where we assume that $e_{t+1}=e_1$.
    The last agent $a$ has preference defined by
    \begin{align*}
        \succ_{a} &\colon r_{a}, r_{v_{n^\prime}^0}, \dots, r_{v_1^0}, s_{a}.
    \end{align*}
    The initial matching $\mu$ is defined to be $\mu(i)=s_i$ for each agent $i$.
    By Claim \ref{clm:W1h1}, a reformist envy-free matching $\sigma$ with respect to $\mu$ is $\sigma(i)=r_{i}$ for each agent $i$.

    \begin{claim}\label{clm:W1h1}
    If $G$ has a multi-colored clique, then there exists a reformist sequence 
    of length $n + \binom{k}{2} + k$.
    \end{claim}
    \begin{proof}
        Let $X$ be a multi-colored clique.
        For simplicity, we re-index vertices so that 
        $X=\{v_1, \dots, v_k\}$ and $v_i\in V_i$ for 
        each integer $i\in\{1,2,\dots, k\}$.
        Consider the following reformist sequence.
        \begin{enumerate}
          \item For each vertex $v_j\in X$, we do the following:
          The agent $v_j^0$ exchanges $s_{v_j^0}$ with $z_{v_j}$.
          Then $v_j^\ell$ exchanges $s_{v_j^\ell}$ with $r_{v_j^\ell}$ for each integer 
          $\ell \in \{1,2,\dots, d_j\}$.
          This takes $d_j+1$ steps for each $v_j\in X$.
          \item For each pair of integers $i, j\in\{1,2,\dots, k\}$ such that $i\neq j$, do the following.
          Let $E[V_i, V_j]=\{e_1,\dots, e_{t_{ij}}\}$ where $t_{ij}:=|E[V_i, V_j]|$.
          As $\{v_i, v_j\}\in E[V_i, V_j]$, we may re-index edges of $E[V_i, V_j]$ so that $e_1=\{v_i, v_j\}$.
          First, the agent $e_1$ exchanges $s_{e_1}$ with $y_{e_1}$.
          This can be done since $y_{e_1}$ has no envy due to Step 1.
          Then in the order of $\ell = 2, 3,\dots, t$, 
          the agent $e_\ell$ exchanges $s_{e_\ell}$ with $r_{e_\ell}$.
          Finally, the agent $e_1$ exchanges $y_{e_1}$ with $r_{e_1}$.
          This takes $t_{ij}+1$ steps for each pair $i, j$.
          Hence the total number of exchanges is $\sum_{i,j} (t_{ij}+1) = |E|+\binom{k}{2}$ as $\sum_{i,j} t_{ij} = |E|$.
          \item The agent $a$ exchanges $s_a$ with $r_a$.
          This can be done because $r_a$ has no envy due to Step 2.
          \item For each vertex $v_j\in X$, the agent $v_j^0$ exchanges $z_{v_j^0}$ with $r_{v_j^0}$.
          This can be done since $r_{v_j^0}$ has no envy from $a$ due to Step 3.
          \item For each vertex $v_j\in V\setminus X$, 
          the agent $v_j^0$ exchanges $s_{v_j^0}$ with $r_{v_j^0}$, and then $v_j^\ell$ exchanges $s_{v_j^\ell}$ with $r_{v_j^\ell}$ for each integer
          $\ell \in \{1,2,\dots, d_j\}$.
        \end{enumerate}
        In the reformist sequence above, each vertex $v_j\in X$ needs $d_j+2$ exchanges in Steps 1 and 4, and each vertex $v_j\in V\setminus X$ requires $d_j+1$ exchanges in Step 5.
        Therefore, since the number of exchanges in Steps 2 and 3 is $|E|+\binom{k}{2}+1$, the total number of exchanges is
        \begin{align*}
          &\sum_{v_j\in X} (d_j+2) + \sum_{v_j\in V\setminus X} (d_j+1) + |E| + \binom{k}{2} + 1 \\
          &= \sum_{v_j\in V} d_j + k + |V|+ |E| + \binom{k}{2} + 1 \\
          &= |V| + 3|E| +\binom{k}{2} +k+ 1
          = |N|+\binom{k}{2}+k,
        \end{align*}
        where the last equation follows from 
        $|N|=|V|+3|E|+1$. This completes the proof. 
    \end{proof}
  
    \begin{claim}\label{clm:W1h2}
    If there exists a reformist sequence of length $|N|+\binom{k}{2}+k$, then $G$ has a multi-colored clique.
    \end{claim}
    \begin{proof}
        Consider a reformist sequence with minimum number of steps.
        We first observe that, because of the minimality, the agent $v_j^\ell$ for each vertex 
        $v_j\in V$ and each integer 
        $\ell \in \{1,2,\dots, d_j\}$ takes only one exchange.
        Moreover, for each pair of integers 
        $i,j\in\{1,2,\dots, k\}$ such that 
        $i \neq j$, the agents in $E[V_i, V_j]$ take $|E[V_i, V_j]|+1$ exchanges in total, and, before that, we need to remove envy at $y_e$ for some $e=\{v_i, v_j\}\in E[V_i, V_j]$ by exchanging $v_i^\ell$'s and $v_j^\ell$'s.
        Also, to exchange with $r_{v_j^0}$ for a vertex $v_j\in V$, we need to remove envy from the agent $a$, which implies that we have to exchange items for the agents $e$ for all $e\in E$ before that.
        
        Define $X$ as the set of $v_j\in V$ such that 
        $v_j^0$ uses $z_{v_j}$ in the reformist sequence.
        Then the number of steps in the sequence is at least $|N|+|X|+\binom{k}{2}$. 
        Since it is at most $|N|+\binom{k}{2}+k$ by the assumption, it follows that $|X|\leq k$.
        We observe that, for any pair of integers 
        $i, j\in\{1,2,\dots, k\}$ such that 
        $i \neq j$, we have $E[V_i, V_j]\cap E[X]\neq \emptyset$, where $E[X]$ is the set of edges induced by $X$.
        In fact, suppose not.
        Then, for such a pair $i, j$ and each edge $e=\{\hat{v}_i, \hat{v}_j\}\in E[V_i, V_j]$, the item $y_e$ receives an envy from some of $\hat{v}_i^{\ell}$'s and $\hat{v}_j^{\ell}$'s.
        Hence we cannot exchange any items on $\bigcup_{e\in E[V_i, V_j]}M_e$.
        This is a contradiction.
        
        Therefore, since $|X|\leq k$, it follows that $X$ forms a multi-colored clique.
  \end{proof}
   
  By Claims~\ref{clm:W1h1} and~\ref{clm:W1h2}, the multi-colored clique problem reduces to the shortest reformist sequence problem, which completes the proof.
\end{proof}

Third, we study the problem parameterized 
by the number of intermediate items.  
Let $K$ denote the set of all the intermediate 
items from the initial envy-free matching $\mu$ to the reformist
envy-free matching $\sigma$, namely, 
$K := M \setminus \{ \mu (i), \sigma(i) \mid i\in N\}$. 
Note that the preprosessing in \cref{sec:preprocessing} does not
increase $|K|$,
and $|K| = m -2n$ holds after the preprocessing. 
We prove that the shortest reformist sequence 
problem parameterized by $|K|$ is fixed-parameter tractable.

\begin{theorem}
\label{thm:fpt}
The shortest reformist sequence 
problem parameterized by $|K|$
is fixed-parameter tractable.
\end{theorem}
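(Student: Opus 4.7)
The plan starts with a structural observation that reduces the problem to reasoning about items in $K$ only. After the preprocessing of Section~\ref{sec:preprocessing}, for each agent $j$ the item $\sigma(j)$ is the unique top element of $\succ_j$ restricted to $M_j$. Consequently, if $j$ does not currently hold $\sigma(j)$, her current item is strictly worse than $\sigma(j)$. Therefore, if any agent $i\ne j$ were to move onto $x=\sigma(j)\in R$ as an intermediate stop, $j$ would immediately envy $i$; hence no such move can occur in a reformist sequence. Combined with the fact (from Section~\ref{sec:preprocessing}) that $\mu(j)\notin M_i$ for $j\ne i$, this forces every intermediate stop to lie in $K$. In particular, each agent $i$'s trajectory is a monotonically increasing subsequence through $K\cap M_i$, visiting at most $|K|$ intermediate items.

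The plan is now to design a dynamic program over a highly compressed state space. I will classify agents into \emph{$K$-types}: agents $i,j$ are of the same type if $K\cap M_i=K\cap M_j$ and $\succ_i,\succ_j$ induce the same order on this common set. The number of types is bounded by $\tau(|K|):=\sum_{S\subseteq K}|S|!\leq (|K|+1)!\cdot 2^{|K|}$, depending only on $|K|$. Agents of the same type are symmetric with respect to moves through $K$: any $K$-trajectory feasible for one is feasible for another, because their $\mu$- and $\sigma$-items never enter $K$ and are invisible to the $K$-subsystem envy checks. Each DP state records (i) for each $x\in K$, the $K$-type of the agent currently at $x$, together with the specific item in her trajectory that $x$ represents, or $\bot$ if $x$ is unassigned, and (ii) for each $K$-type, a three-valued indicator in $\{\textit{none progressed}, \textit{some progressed}, \textit{all progressed}\}$ summarising how many agents of that type have finished their $K$-trajectory and landed on $\sigma$. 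The number of states is then bounded by a function of $|K|$ only.

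Transitions between DP states correspond to one $K$-move each: an agent of a given type either enters an empty $K$-slot, advances to the next $K$-item on her trajectory, or leaves $K$ into her $\sigma$. Between two consecutive $K$-moves, one schedules as many \emph{direct} moves (of agents whose trajectory uses no $K$-item at all) as possible, and this batching is computed in polynomial time by a topological-sort argument exactly analogous to the proof of \cref{thm:atmost3items}, yielding a per-transition cost in $\mathrm{poly}(n,m)$. Checking the envy-freeness of each transition reduces to a local test involving only $\mathrm{poly}(n,m)$ work, because the identities of agents playing each type-role can be chosen on demand: any agent of the required type who is still at $\mu$ and whose preference profile admits the remainder of the trajectory suffices. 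A shortest path from the initial state (all agents at $\mu$) to any final state (all $K$-slots empty, every type marked ``all progressed'') gives the answer, with total running time $g(|K|)\cdot\mathrm{poly}(n,m)$.

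The main obstacle is justifying the compression. Two subclaims must be proved: (a) that $K$-type equivalence is preserved by all envy checks arising in the sequence, so that swapping two same-type agents in their $K$-roles preserves feasibility throughout; and (b) that the three-valued summary of completed agents per type suffices for feasibility decisions at any future transition, which in turn requires showing that once the required number of same-type representatives have become available for either ``progressed'' or ``still at $\mu$'', further distinctions do not affect any envy check on $K$-subsystem moves. Establishing (a) and (b) requires a careful interchange argument that decouples envy constraints into their $K$-part (captured by the compressed state) and their non-$K$-part (handled by the topological-sort batching between $K$-moves).
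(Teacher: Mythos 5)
Your opening structural observation is correct and matches what the paper uses implicitly: after the preprocessing of Section~\ref{sec:preprocessing}, no agent can ever stop at $\mu(j)$ or $\sigma(j)$ of another agent $j$, so all intermediate stops lie in $K$. From there, however, your argument has a genuine gap: the entire correctness of the type-compressed dynamic program rests on your subclaims (a) and (b), which you acknowledge are unproven, and at least as stated they do not hold. For (b), to decide whether an agent may move onto $x\in K$ you must know whether \emph{any} agent $j$ with $x\in M_j$ still sits at $\mu(j)$ (such a $j$ necessarily envies $x$, since after preprocessing every item of $M_j$ is preferred to $\mu(j)$); a three-valued ``none/some/all landed on $\sigma$'' flag per type, even combined with the $K$-occupancy, does not determine whether the count of agents of that type still at their initial items is zero. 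For (a), two agents with the same $K$-type are not interchangeable: their $\sigma$-items are different and may appear in \emph{different} other agents' preference lists, so which representative of a type has already progressed changes which direct moves $\mu(i)\to\sigma(i)$ are currently unblocked. Your topological-sort batching therefore cannot be decoupled from the identities of the agents occupying the $K$-roles, which is exactly the information the compression discards. These are not cosmetic issues; they are the heart of the argument, and the proposal does not close them.

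The paper's proof is far simpler and avoids all of this. It first performs, greedily, every available direct exchange $\mu(i)\to\sigma(i)$ (these are always safe to do first). Once no direct exchange is possible, the next move in any reformist sequence must place some agent $i$ on an item $x\in K$ that is acceptable to $i$ \emph{only} (any other agent accepting $x$ would still be at her initial item and would envy). The algorithm picks such an $x$ and branches on two cases: $x$ is used (in which case one may as well move $i$ onto $x$ immediately, yielding a new instance with initial matching $\mu'(i)=x$) or $x$ is never used (delete $x$ from $M_i$). Each branch decreases $|K|$ by one, giving an $O(2^{|K|}\cdot\mathrm{poly}(n,m))$ algorithm with no need for state compression or interchange arguments. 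If you want to rescue your DP, you would need to replace the three-valued indicators by counters (capped at $|K|+1$, say) of agents of each type not yet departed from $\mu$, and to handle the asymmetry of $\sigma$-items within a type, at which point the branching argument is clearly the shorter road.
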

\begin{proof}
    In order to prove the theorem, we design a fixed-parameter algorithm.
    \begin{description}
    \item[Step 1.] While some agent $i$ can exchange $\mu (i)$ with $\sigma (i)$, we nominate $i$ to exchange $\mu (i)$ with $\sigma (i)$ and remove $i$ from the instance.
    \item[Step 2.] Choose an item $x \in K$ such that $x\in M_i$ for exactly one agent $i$.
    We solve recursively the following two instances with smaller parameter.
    \begin{itemize}
        \item An instance with the initial matching $\mu'$ where $\mu' (i) = x$ and $\mu' (j) = \mu (j)$ for the other agents $j$
        \item An instance obtained by replacing $M_i$ with $M'_i = M_i \setminus \{x\}$.
    \end{itemize}
    \end{description}
    We first observe that the exchanges in Step 1 can be done first before the other exchanges without destroying the
    minimality of a reformist sequence.
    Thus, we may assume that no agent $i$ can exchange $\mu (i)$ with $\sigma (i)$.
    The next step must be for some agent $i$ to exchange $\mu (i)$ with some item $x\in M_i$.
    Since the resulting matching is envy-free, no agent $j\neq i$ has $x$ in $M_{j}$.
    
    We consider branching using such an item $x$.
    That is, we pick arbitrarily an item $x \in K$ such that $x\in M_i$ for exactly one agent $i$, and consider two cases: when the next step is to exchange $\mu(i)$ with $x$ for the agent $i$, or when the item $x$ is never used in the reformist sequence.
    We note that, if $x$ is used in the reformist sequence, then we can exchange $\mu(i)$ with $x$ now before the other agents' exchange, as it does not worse the situation.
    For the former one, we consider the instance with the initial matching $\mu'$ where $\mu' (i) = x$ and $\mu' (j) = \mu (j)$ for the other agents $j$.
    For the latter one, we solve the instance obtained by replacing $M_i$ with $M'_i = M_i \setminus \{x\}$.
    For each case, the parameter $|K|$ is decreased by one.
    Thus the depth of recursion is at most $|K|$.
    Therefore, the total time complexity is $O(2^{|K|}p(m,n))$
    for some polynomial $p$.
\end{proof}

\section{Conclusion}

We studied a process of iterative improvement of envy-free matchings in
the house allocation problem, and defined a reformist envy-free
matching as an outcome of the process.
We proved that a reformist envy-free matching is unique up to 
the choice of an initial envy-free matching.
Then, we studied the shortest reformist sequence problem and showed a
contrast between NP-hardness and polynomial-time solvability with
respect to the lengths of preference lists of agents and the number of
occurrences of each item in the preference lists.

Several questions remain unsolved.
As for approximation, we proved the inapproximability of factor 
$c \ln n$ for some constant $c$.
On the other hand, we do not know any approximation algorithm.
As for fixed-parameter tractability, we showed an
fixed-parameter algorithm when the length of a reformist sequence or
the number of intermediate items is a parameter.
On the other hand, we do not know this is also the case when $n$ is a
parameter.
Another direction of research may look at the case where preferences
may contain a tie or a pair of incomparable items.

\bibliographystyle{plain}
\bibliography{envyfree}


%
%
%
%
%


%
%

%
%
%
%
%
%
%
%
%
%
%
%
%

%
%
%
%
%
%
%
%
%
%
%

%
%
%
%
%
%

%
%
%
%
%
%
%
%
%
%
%
%
%

%
%
%
%
%
%
%
%
%
%
%
%

%
%
%
%
%
%

%

%
%
%
%

%
%
%
%
%
%

%
%
%
%
%
%

%
%
%
%
%
%
%

\end{document}